\documentclass[preprint,12pt]{elsarticle}

\usepackage{amsmath, amssymb}
\usepackage{amsthm}
\usepackage{graphicx}
\usepackage{booktabs}
\usepackage{tikz}
\usepackage{verbatim}
\usepackage{float}
\usepackage{orcidlink}
\usepackage{hyperref}
\usepackage{pgfplots}
\pgfplotsset{compat=1.18}

\hypersetup{
    colorlinks=true,
    linkcolor=black,
    citecolor=black,
    urlcolor=black
}

\newtheorem{theorem}{Theorem}[section]
\newtheorem{lemma}[theorem]{Lemma}
\newtheorem{definition}[theorem]{Definition}

\hypersetup{allcolors=black}

\begin{document}

\begin{frontmatter}



\title{Thermal Recurrence Orders of the Potts Model Partition Function in Grid Graphs}


\author{Yi-Zhong Wang}
\ead{reddifsh.cs14@nycu.edu.tw}
\address{Department of Computer Science, National Yang Ming Chiao Tung University, Hsinchu City, Taiwan}

\begin{abstract}


We study the linear recurrence order of the Potts model partition function on thermal 2D grid graphs.
By restricting the transfer matrix (TM) to the real coupling axis, the physical operator maintains diagonalizability under the Spectral Theorem. We show that this thermal regularization allows the Krylov subspace to saturate the unconstrained planar state capacity, locking the recurrence order to the Dyck path up to reversal (OEIS A007123) for $q \ge 4$. Furthermore, the recurrence order collapses to height-restricted Dyck paths up to reversal (OEIS A001998) for $q=3$ due to finite-index Jones-Wenzl projections, and to the zero-magnetization conservation sector (OEIS A001405) for $q=2$. This framework bridges graph-theoretic combinatorics with the representation theory of physical loop gas models.
\end{abstract}



\begin{keyword} 
Potts model \sep transfer matrix \sep Temperley-Lieb algebra \sep partition function \sep non-crossing partitions 
\end{keyword}

\end{frontmatter}


\section{Introduction}



The evaluation of the Potts model partition function (which reduces to the chromatic polynomial $P(G,q)$ at $T=0$) is a computationally hard counting problem, classified as \#P-complete~\cite{jaeger1990}.
Determining the minimal recurrence order is equivalent to identifying the exact dimension of the operative Krylov subspace. Consequently, calculating this recurrence order dictates the lower bounds of computational complexity for exact transfer matrix (TM) algorithms.

Beyond algorithmic efficiency, recent advancements in transfer matrix frameworks have demonstrated that the recurrence order serves as a powerful diagnostic tool for identifying hidden physical symmetries~\cite{Akin2025, Talab2026}. If an empirical recurrence order truncates below the theoretical upper bound, as observed in the $q=2$ limit, it signals the presence of an underlying conservation law or algebraic singularity. This mechanism aligns with contemporary findings where topological constraints induce spectral reductions in recursive lattice structures~\cite{Equimodular2026}. Thus, computing the dimensional bifurcation of the recurrence order provides a direct mathematical mechanism to predict hidden physical conservation laws within the system, bridging classical enumerative combinatorics with modern continuous loop and critical phase analyses~\cite{Jacobsen2024}.

To formalize this framework, we first define the standard physical structure of the model before formulating the corresponding transfer matrices.

\subsection{The $q$-State Potts Model}
In physics, we treat the vertices as physical particles with a "spin" state. The $q$-state Potts model describes how these spins interact~\cite{wu1982}. 

Each vertex $i$ gets a spin $\sigma_i$ from possible states $q$. The energy of the system (Hamiltonian) uses the Kronecker delta function $\delta$:
\begin{equation}
  H(\sigma) = J \sum_{\{u,v\} \in E} \delta_{\sigma_u, \sigma_v}
\end{equation}
Here, $J > 0$ is the strength of the interaction. The delta function is $1$ if connected spins match, and $0$ if they differ. 

The total number of states in this system is given by the partition function $Z(G; q, T)$:
\begin{equation}
  Z(G; q, T) = \sum_{\{\sigma\}} \exp\left(-\frac{H(\sigma)}{k_B T}\right) = \sum_{\{\sigma\}} \prod_{\{u,v\} \in E} \exp\left(-\frac{J}{k_B T} \delta_{\sigma_u, \sigma_v}\right)
\end{equation}
The sum covers all $q^{|V|}$ possible spin arrangements. $k_B$ is the Boltzmann constant, and $T$ is temperature~\cite{baxter1982}.

By applying the standard Fortuin-Kasteleyn (FK) random cluster representation~\cite{wu1982} and defining the thermodynamic weight variable $v = \exp(-J/k_B T) - 1$, we can trace out the spin degrees of freedom. The Potts partition function simplifies to a sum over all possible edge subsets $A \subseteq E$:
\begin{equation} \label{eq:fk_expansion}
  Z(G; q, v) = \sum_{A \subseteq E} q^{k(A)} v^{|A|}
\end{equation}
where $k(A)$ denotes the number of connected components (including isolated vertices) formed by the edge subset $A$.


\section{TM and Krylov Subspaces}

This section defines the layer-to-layer TM and proves that the linear recurrence order equals the dimension of the operator's Krylov subspace.


\subsection{State Space Construction}
We view the $m \times n$ grid graph $G_{m,n}$ as a sequence of $n$ vertical columns. Each column is a path of $m$ vertices with $q$ available color states, $\mathcal{Q} = \{1, 2, \dots, q\}$. 

At finite temperatures, adjacent vertices in the same column are permitted to share colors. Consequently, the state space $\mathcal{V}_m$ expands to the unconstrained Non-Crossing Partition (NCP) dictated by the thermal loop gas configurations. 

The dimension of this unreduced topological state space, prior to symmetric projection, is governed by the Catalan numbers:
\begin{equation}
  \dim(\mathcal{V}_m) = C_m = \frac{1}{m+1}\binom{2m}{m}
\end{equation}


\subsection{Matrix Elements of the TM}
The layer-to-layer TM $\mathcal{T}_m(v)$ is constructed from the Boltzmann weights of the interactions. We decompose the global TM into its horizontal and vertical interaction components~\cite{baxter1982}:
\begin{equation}
  \mathcal{T}_m(v) = V(v)H(v)
\end{equation}

The horizontal transition matrix $H(v)$ evaluates the interactions between adjacent column configurations $c$ and $c'$:
\begin{equation}
  [H(v)]_{c,c'} = \prod_{i=1}^m \left( 1 + v \delta_{c_i, c'_i} \right)
\end{equation}
Due to the symmetry of the Kronecker delta, $H(v)$ is a real symmetric matrix. The vertical interaction matrix $V(v)$ is a diagonal matrix enforcing intra-column weights:
\begin{equation}
  [V(v)]_{c,c} = \prod_{i=1}^{m-1} \left( 1 + v \delta_{c_i, c_{i+1}} \right)
\end{equation}

For any finite temperature $T>0$, the thermodynamic weight $v$ is positive ($v > 0$). This guarantees that $V(v)$ is a positive-definite diagonal matrix, allowing the unique definition of its positive real square root, $V^{1/2}$. By applying a similarity transformation, we construct the symmetrized matrix:
\begin{equation}
  \mathcal{T}_{sym} = V^{1/2}H(v)V^{1/2}
\end{equation}

Because $H(v)$ is real symmetric and $V^{1/2}$ is real diagonal, $\mathcal{T}_{sym}$ is a real symmetric matrix. According to the Spectral Theorem~\cite{axler2015}, it is diagonalizable over the real numbers $\mathbb{R}$. This diagonalizability forbids the existence of nilpotent operators and defective Jordan blocks at any $T>0$.


While the preceding formulation assumes uniform interaction weights $v$, this thermal diagonalizability generalizes to spatially inhomogeneous systems. In realistic physical lattices, local horizontal and vertical coupling strengths, $J_{i,n}^{(H)}>0$ and $J_{i,n}^{(V)}>0$, may fluctuate due to structural anisotropy or external gradients. At any finite temperature $T>0$, the corresponding local thermodynamic weights $v_{i,n}^{(H)}$ and $v_{i,n}^{(V)}$ remain positive. The generalized non-uniform layer-to-layer transfer matrix thus decomposes into distinct horizontal and vertical interaction operators: $\mathcal{T}_{m}=V(\{v_{i,n}^{(V)}\})H(\{v_{i,n}^{(H)}\})$.

Because every local weight is positive, the horizontal matrix $H(\{v_{i,n}^{(H)}\})$ remains real symmetric, and the vertical matrix $V(\{v_{i,n}^{(V)}\})$ acts as a positive-definite diagonal operator. The generalized symmetrized matrix $\mathcal{T}_{sym}=V^{1/2}H(\{v_{i,n}^{(H)}\})V^{1/2}$ is therefore a real symmetric matrix. By the Spectral Theorem~\cite{axler2015}, $\mathcal{T}_{m}$ inherits the complete real spectrum and algebraic multiplicities of $\mathcal{T}_{sym}$ via a similarity transformation, remaining fully diagonalizable over $\mathbb{R}$. This ensures that no defective Jordan blocks or non-trivial nilpotent operators can emerge under arbitrary positive non-uniform couplings, preserving the mathematical foundation for Krylov subspace saturation across inhomogeneous physical systems.


It must be emphasized that this diagonalizability strictly requires $T > 0$. The zero-temperature chromatic limit ($T = 0, v = -1$) induces vanishing Boltzmann weights and forbidden configurations, resulting in defective Jordan blocks and algebraic collapse, which falls outside the scope of this saturation framework.


\subsection{The Partition Function and Krylov Subspaces}
To evaluate the total partition function of the grid $G_{m,n}$, we apply the transfer matrix $\mathcal{T}_m(v)$ iteratively $n-1$ times. We define a symmetric boundary vector $|v_0\rangle$ as the uniform superposition of all column configurations in the thermal NCP space:
\begin{equation}
  |v_0\rangle = \sum_{c \in \mathcal{C}_m} |c\rangle
\end{equation}
The finite-temperature partition function $Z(G_{m,n}; q, T)$ is the expectation value of the transfer matrix bounded by these initial and final states:
\begin{equation} \label{eq:inner_product}
  Z(G_{m,n}; q, T) = \langle v_0 | (\mathcal{T}_m(v))^{n-1} | v_0 \rangle
\end{equation}

For a fixed lattice height $m$ at finite temperature $T>0$, the sequence $\{Z(G_{m,n}; q, T)\}_{n=1}^{\infty}$ satisfies a linear recurrence relation governed by the minimal polynomial of the operator. Let $\mu_{\mathcal{T}}(x)$ be the minimal polynomial of degree $d$ for $\mathcal{T}_m(v)$:
\begin{equation}
  \mu_{\mathcal{T}}(\mathcal{T}_m(v)) = (\mathcal{T}_m(v))^d - \sum_{k=0}^{d-1} \alpha_k (\mathcal{T}_m(v))^k = 0 
\end{equation}
Multiplying by $(\mathcal{T}_m(v))^{n-1}$ and taking the expectation value with respect to $|v_0\rangle$ yields:
\begin{equation}
  \langle v_0 | (\mathcal{T}_m(v))^{n+d-1} | v_0 \rangle = \sum_{k=0}^{d-1} \alpha_k \langle v_0 | (\mathcal{T}_m(v))^{n+k-1} | v_0 \rangle
\end{equation}
Substituting equation (\ref{eq:inner_product}), we obtain the exact recurrence relation:
\begin{equation}
  Z(G_{m,n+d}; q, T) = \sum_{k=0}^{d-1} \alpha_k Z(G_{m,n+k}; q, T)
\end{equation}

The minimal degree $d$ is geometrically bounded by the dimension of the Krylov subspace generated by the transfer matrix acting on the boundary state:
\begin{equation}
  \mathcal{K}(v_0, \mathcal{T}_m(v)) = \text{span} \left\{ |v_0\rangle, \mathcal{T}_m(v)|v_0\rangle, (\mathcal{T}_m(v))^2|v_0\rangle, \dots \right\}
\end{equation}

\begin{theorem}[Recurrence Order and Krylov Dimension]



The minimal recurrence order of the partition function sequence is strictly equal to the dimension of the Krylov subspace, $\dim(\mathcal{K})$.

\end{theorem}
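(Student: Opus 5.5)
The plan is to prove the two inequalities separately. Write $\mathcal{T}=\mathcal{T}_m(v)$ and $z_k=\langle v_0|\mathcal{T}^{k}|v_0\rangle$, so that $Z(G_{m,n};q,T)=z_{n-1}$. Let $\mathcal{K}=\mathcal{K}(v_0,\mathcal{T})$ and $d_K=\dim\mathcal{K}$.

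\textbf{Upper bound.} I would replace the global minimal polynomial $\mu_{\mathcal{T}}$ used above by the \emph{local} minimal polynomial $\mu_{v_0}$. This is the monic polynomial of least degree with $\mu_{v_0}(\mathcal{T})|v_0\rangle=0$. A standard cyclic-subspace argument shows that $\deg\mu_{v_0}=d_K$: the vectors $|v_0\rangle,\mathcal{T}|v_0\rangle,\dots$ stay linearly independent until the first power that depends on the earlier ones, and from then on every power lies in their span. Repeating the derivation of the recurrence with $\mu_{v_0}$ in place of $\mu_{\mathcal{T}}$ (apply $\mathcal{T}^{n-1}$ and pair with $\langle v_0|$) gives a recurrence of order $d_K$ for $z_k$. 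Hence the minimal order is at most $d_K$.

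\textbf{Lower bound.} Suppose a monic $p$ of degree $r$ annihilates the sequence, i.e.\ $\langle v_0|\mathcal{T}^{k}p(\mathcal{T})|v_0\rangle=0$ for all $k\ge 0$. I want to conclude that $p(\mathcal{T})|v_0\rangle=0$, which forces $\mu_{v_0}\mid p$ and so $r\ge d_K$.

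The vector $w=p(\mathcal{T})|v_0\rangle$ lies in $\mathcal{K}$, and the hypothesis says $w$ is annihilated by every left vector $\langle v_0|\mathcal{T}^k$. So the argument needs the pairing between the left Krylov space $\mathrm{span}\{\langle v_0|\mathcal{T}^k\}$ and the right space $\mathcal{K}$ to be nondegenerate. Equivalently, the $d_K\times d_K$ Hankel matrix $(z_{i+j})_{0\le i,j<d_K}$ must be invertible.

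This is where the thermal diagonalizability of the preceding subsection enters. Since $\mathcal{T}=VH$ with $H$ symmetric and $V$ positive diagonal, $\mathcal{T}$ is self-adjoint for the positive-definite inner product $(x,y)_V=x^{\mathsf T}V^{-1}y$, because $(x,\mathcal{T}y)_V=x^{\mathsf T}Hy$. In this inner product $z_k=(V|v_0\rangle,\mathcal{T}^k|v_0\rangle)_V$.

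If the two boundary vectors coincide up to the weighting, the Hankel matrix is a Gram matrix. I would express the pairing in the $\mathcal{T}_{sym}$ picture, and if necessary absorb the factor $V^{1/2}$ into the boundary columns, which is the physically natural convention since the end columns carry their own vertical weights. The Hankel matrix then becomes $\bigl((\mathcal{T}^{i}u,\mathcal{T}^{j}u)\bigr)_{i,j}$ for one vector $u$ and a positive-definite inner product. It is therefore positive definite on $\mathcal{K}$. Then $(w,w)=0$ gives $w=0$, and the lower bound follows.

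\textbf{Main obstacle.} The hard step is this nondegeneracy. With the boundary literally as written, the left vector $V|v_0\rangle$ and the right vector $|v_0\rangle$ differ. In that case the honest statement is that the minimal order equals the rank of the Hankel form, which is at most $d_K$.

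I would try first to prove equality via the spectral decomposition $\mathcal{T}=\sum_\lambda \lambda P_\lambda$, with $V^{-1}$-orthogonal projectors $P_\lambda$. This gives
\[
z_k=\sum_\lambda \lambda^k\,\bigl(V v_0,P_\lambda v_0\bigr)_V=\sum_\lambda \lambda^k\,v_0^{\mathsf T}P_\lambda v_0 .
\]
Here $d_K$ is the number of $\lambda$ with $P_\lambda v_0\neq 0$, and the recurrence order is the number of $\lambda$ with nonzero coefficient $v_0^{\mathsf T}P_\lambda v_0$. So the theorem reduces to showing that $v_0^{\mathsf T}P_\lambda v_0\neq 0$ whenever $P_\lambda v_0\neq 0$. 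That holds as soon as the boundary is symmetrized as above, since the coefficient is then $\|P_\lambda u\|^2>0$. I would adopt this symmetrized boundary convention explicitly in the proof.
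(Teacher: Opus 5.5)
Your upper bound via the local minimal polynomial $\mu_{v_0}$ is correct. It is the precise form of the paper's appeal to Cayley--Hamilton: the global $\mu_{\mathcal{T}}$ only bounds the order by $\deg\mu_{\mathcal{T}}$, which can exceed $\dim\mathcal{K}$. You are also right that the lower bound is where the content lies. Linear independence of $|v_0\rangle,\mathcal{T}|v_0\rangle,\dots$, which is all the paper's one-line proof invokes, does not yield it; one needs the pairing between $\mathrm{span}\{\langle v_0|\mathcal{T}^k\}$ and $\mathcal{K}$ to be nondegenerate. For an arbitrary $\mathcal{T}=VH$ (with $H$ symmetric and $V$ positive diagonal) this fails. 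Take $2\times 2$ data where $\mathcal{T}_{sym}$ has distinct eigenvalues, $V^{1/2}v_0$ is orthogonal to one eigenvector, and $V^{-1/2}v_0$ is not. Then $z_k=(V^{1/2}v_0)^{\mathsf T}\mathcal{T}_{sym}^k V^{-1/2}v_0$ involves a single eigenvalue, so the order is at most $1$ while $\dim\mathcal{K}=2$. So something specific to the Potts matrices has to enter.

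\textbf{The gap is in how you close it.} Replacing the boundary by $u=V^{1/2}v_0$ changes the Krylov space, not just the sequence: $\mathcal{T}_{sym}^k u=V^{-1/2}\mathcal{T}^k Vv_0$. With $Q_\lambda=V^{-1/2}P_\lambda V^{1/2}$ the orthogonal spectral projectors of $\mathcal{T}_{sym}$, the coefficient $\|Q_\lambda u\|^2$ is nonzero iff $P_\lambda Vv_0\neq 0$. Your argument therefore gives the order of the modified sequence as $\dim\mathcal{K}(Vv_0,\mathcal{T})$, not the $d_K=\#\{\lambda: P_\lambda v_0\neq 0\}$ you fixed at the start. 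Your last paragraph silently identifies the two, and the statement about $\langle v_0|\mathcal{T}^{n-1}|v_0\rangle$ and $\mathcal{K}(v_0,\mathcal{T})$ is left unproved.

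The missing input is structural. By its definition, $H=(E_q+vI_q)^{\otimes m}$ with $E_q$ the all-ones $q\times q$ matrix, and $v_0=\mathbf{1}^{\otimes m}$. Hence $Hv_0=(q+v)^m v_0$ and $\mathcal{T}v_0=(q+v)^m Vv_0$. Insert this into your own reduction, using your inner product $(x,y)_V=x^{\mathsf T}V^{-1}y$: $v_0^{\mathsf T}P_\lambda v_0=(Vv_0,P_\lambda v_0)_V=(q+v)^{-m}(\mathcal{T}v_0,P_\lambda v_0)_V=(q+v)^{-m}\lambda\,(P_\lambda v_0,P_\lambda v_0)_V$. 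This is nonzero whenever $P_\lambda v_0\neq 0$, because every eigenvalue of $\mathcal{T}$ is nonzero: $V$ is invertible, and $H$ is invertible since $E_q+vI_q$ has eigenvalues $q+v>0$ and $v\neq 0$. That proves the theorem with the boundary exactly as written. Equivalently, $\mathcal{K}(v_0,\mathcal{T})=\mathcal{K}(Vv_0,\mathcal{T})$ and $z_{k+1}=(q+v)^m\,u^{\mathsf T}\mathcal{T}_{sym}^k u$. So the literal sequence is your symmetrized one shifted and rescaled, and no change of convention is needed.
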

\begin{proof}
This result follows directly from the Cayley-Hamilton theorem and the linear independence of the basis vectors spanning the subspace~\cite{axler2015}.
\end{proof}


\section{Projection to OEIS A007123}

To determine the minimal recurrence order within the finite-temperature regime ($T>0$), we must remove global color symmetries and spatial redundancies. This section formalizes the symmetric reduction of this thermally unconstrained space and matches the resulting dimension to the integer sequence OEIS A007123~\cite{oeisA007123}.

\subsection{The Global Color Symmetry Group}
Colors are interchangeable. Let $\mathcal{Q} = \{1, 2, \dots, q\}$. The symmetric group $S_q$ contains all permutations of $\mathcal{Q}$.

\begin{definition}[Group Action]
For any permutation $\sigma \in S_q$ and column $c = (c_1, \dots, c_m)$, we define:
\begin{equation}
  \sigma \cdot c = (\sigma(c_1), \dots, \sigma(c_m))
\end{equation}
This defines an operator $U_\sigma |c\rangle = |\sigma \cdot c\rangle$.
\end{definition}

\begin{lemma}[Commutativity with the TM]
The layer-to-layer TM $\mathcal{T}_m(v)$ commutes with color permutations: $[\mathcal{T}_m(v), U_\sigma] = 0$~\cite{wu1982}.
\end{lemma}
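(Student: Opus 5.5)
The plan is to prove $[\mathcal{T}_m(v),U_\sigma]=0$ by proving that each factor of the decomposition $\mathcal{T}_m(v)=V(v)H(v)$ commutes with $U_\sigma$ separately. Throughout, I work in the spin-configuration basis $\{|c\rangle : c\in\mathcal{Q}^m\}$, where the matrix elements of $H(v)$ and $V(v)$ are given explicitly.

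The first step is to record how $U_\sigma$ acts. It is the permutation matrix $(U_\sigma)_{c',c}=\delta_{c',\sigma\cdot c}$. Since $\sigma$ is a bijection of $\mathcal{Q}$, the map $c\mapsto\sigma\cdot c$ is a bijection of $\mathcal{Q}^m$. Hence $U_\sigma$ is orthogonal, with $U_\sigma^{-1}=U_\sigma^{T}=U_{\sigma^{-1}}$.

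Commuting with $U_\sigma$ is therefore equivalent to conjugation invariance, $U_\sigma^{-1} A U_\sigma = A$. In matrix elements this reads
\[
A_{\sigma\cdot c,\,\sigma\cdot c'} = A_{c,c'} \quad\text{for all } c,c' .
\]

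The second step checks this invariance for each factor. The key fact is that a bijection preserves equality: $\delta_{\sigma(a),\sigma(b)}=\delta_{a,b}$ for all $a,b\in\mathcal{Q}$.

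For $H(v)$, applying this factor by factor gives
\[
[H(v)]_{\sigma\cdot c,\,\sigma\cdot c'}=\prod_{i=1}^m\bigl(1+v\,\delta_{\sigma(c_i),\sigma(c'_i)}\bigr)=[H(v)]_{c,c'} .
\]

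For the diagonal matrix $V(v)$ the same argument applies to each factor $1+v\,\delta_{c_i,c_{i+1}}$. This gives $[V(v)]_{\sigma\cdot c,\sigma\cdot c}=[V(v)]_{c,c}$.

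The third step concludes. We have $U_\sigma H=HU_\sigma$ and $U_\sigma V=VU_\sigma$, so
\[
U_\sigma VH = VU_\sigma H = VHU_\sigma ,
\]
that is, $[\mathcal{T}_m(v),U_\sigma]=0$. The same argument covers the inhomogeneous operator $V(\{v^{(V)}_{i,n}\})H(\{v^{(H)}_{i,n}\})$. This is because $\sigma$ acts only on colour values and never on the site indices that label the weights.

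It also covers the symmetrized $\mathcal{T}_{sym}$. Since $V^{1/2}$ is diagonal with entries that are functions of $[V]_{c,c}$, it commutes with $U_\sigma$ as well.

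The calculation itself is routine. The one point that needs care is the choice of basis. The paper describes the state space both as colour configurations and as non-crossing partitions with $\dim \mathcal{V}_m = C_m$. However, the definition of $U_\sigma$, and the formulas for $H$ and $V$, only make sense on the colour basis $\mathcal{Q}^m$.

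I would therefore state the lemma on that space. On the partition (connectivity) basis, $S_q$ acts trivially, and commutation is then automatic. Either way, the essential content is the invariance of the Kronecker delta under bijections.
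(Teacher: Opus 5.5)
Your proof is correct and follows the same route as the paper: both rest on the invariance $\delta_{\sigma(a),\sigma(b)}=\delta_{a,b}$, checked on matrix elements, to conclude $U_\sigma^{-1}\mathcal{T}_m(v)U_\sigma=\mathcal{T}_m(v)$. Your factor-by-factor treatment is in fact more complete, since the paper's displayed identity equates the matrix element of $\mathcal{T}_m(v)$ with the horizontal product of $H(v)$ alone and never checks the vertical factor $V(v)$, which you verify separately.
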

\begin{proof}
Because permutations merely reassign labels globally, the Kronecker delta function within the Boltzmann weights is invariant under this operation:
\begin{equation}
  \prod_{i=1}^m \left( 1 + v \delta_{\sigma(c_i^{(1)}), \sigma(c_i^{(2)})} \right) = \prod_{i=1}^m \left( 1 + v \delta_{c_i^{(1)}, c_i^{(2)}} \right) = \langle c^{(1)} | \mathcal{T}_m(v) | c^{(2)} \rangle
\end{equation}
Thus, $U_\sigma^{-1} \mathcal{T}_m(v) U_\sigma = \mathcal{T}_m(v)$.
\end{proof}

The starting boundary state $|v_0\rangle$ includes all colors equally, meaning it isolates the trivial character subspace. The Krylov subspace is therefore locked within this symmetric invariant space. While the total number of generic partitions is the Bell number $B_m$~\cite{aigner1999}, the planar loop connectivity of the Temperley-Lieb (TL) algebra restricts this space to non-crossing topologies.

\subsection{The Unconstrained NCP}
Because thermal fluctuations dissolve the local adjacency constraints, every possible NCP of the $m$ vertical vertices becomes accessible. 

\begin{lemma}[Catalan State Space]
The effective topological configuration space $\mathcal{V}_{\text{eff}}$ of the $T>0$ system is isomorphic to the set of all unconstrained NCP of $m$ elements, denoted as $\text{NCP}(m)$. The dimension of this space is given by the Catalan number $C_m$.
\end{lemma}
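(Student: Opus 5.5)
The plan is to identify the $S_q$-invariant content of a column configuration with its connectivity pattern, and then to show that in a planar strip this pattern is always a non-crossing partition and that every such partition occurs. Throughout, $q \ge m$ is assumed so that no partition is excluded for lack of colours; the small-$q$ truncations are treated separately later in the paper.

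The first step is to pass from colourings to set partitions. By the Commutativity lemma, the Krylov subspace generated from $|v_0\rangle$ lies in the $S_q$-invariant subspace. I will take a basis of that subspace indexed by orbit sums $\sum_{\sigma} U_\sigma|c\rangle$. Two columns $c,c'$ lie in the same orbit exactly when they induce the same equivalence relation $i\sim j \iff c_i=c_j$ on $\{1,\dots,m\}$. So the orbits correspond bijectively to set partitions of $[m]$ with at most $q$ blocks. This gives a space of dimension $B_m$ when $q\ge m$.

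The second step replaces colour-equality by FK connectivity. Using the expansion (\ref{eq:fk_expansion}) applied to the strip, the state carried between layers is the partition of the current column induced by the clusters of the already-processed part of the lattice. I will show that only non-crossing partitions arise, by induction on the number of columns:
\begin{itemize}
\item Adding a horizontal bond layer leaves the partition unchanged up to relabeling.
\item Adding a vertical bond $(i,i+1)$ joins two adjacent blocks.
\item Deleting the previous column's vertices projects the partition onto the new column.
\end{itemize}
Each of these operations preserves non-crossingness. The key point is a planarity argument: if $a<b<c<d$ with $a\sim c$ and $b\sim d$, the two connecting paths in the planar half-strip to the left of the column must intersect at a vertex. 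That would merge the two clusters, which is a contradiction. Hence every reachable state lies in $\text{NCP}(m)$.

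The third step is surjectivity: every NCP is realized. I will construct it by induction on nesting depth, using a few columns with positive-weight bonds ($v>0$). Outer blocks are connected through arcs routed further left, and inner blocks are nested inside them. Positivity of $v$ ensures no configuration carries zero weight. Combining the three steps gives $\mathcal{V}_{\text{eff}}\cong\text{NCP}(m)$. The count $|\text{NCP}(m)|=C_m$ then follows from the standard first-element decomposition $C_{m}=\sum_{k} C_{k}C_{m-1-k}$, or from the bijection to Dyck paths.

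The main obstacle is reconciling the two descriptions. The paper's transfer matrix is written in the colour basis, where crossing colour-equalities do occur, while the NCP structure lives in the FK connectivity basis. I would first make precise that the effective space is the connectivity (TL/FK) representation, via the intertwiner sending a connectivity state to the sum of colourings constant on its blocks. Then I would argue that this map is injective for $q\ge m$, so that its image is the invariant subspace the Krylov iteration actually explores.
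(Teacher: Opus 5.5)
Your argument is correct. It rests on the same two facts as the paper's proof: planarity rules out crossing blocks, and non-crossing partitions are counted by $C_m$. You reach them by a more complete route.

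The paper's proof does three things: it states the non-crossing criterion, asserts that crossing blocks would force crossing lines in the planar TL standard module, and cites the Catalan enumeration. It never shows that every non-crossing partition actually occurs. It also never connects this connectivity space to the colour basis on which $\mathcal{T}_m(v)$ and $|v_0\rangle$ are defined.

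Your proposal instead:
\begin{enumerate}
\item notes that the $S_q$-invariant colour space has dimension $B_m$, so it cannot itself be the Catalan space;
\item proves non-crossingness in the FK cluster picture by a concrete path-intersection argument in the processed half-strip, with no appeal to the TL algebra;
\item adds realizability of every NCP;
\item links the two pictures through the intertwiner $\Phi(\pi)=\sum_{c\ \text{constant on blocks of}\ \pi}|c\rangle$.
\end{enumerate}
This intertwiner satisfies $\mathcal{T}_m(v)\,\Phi=\Phi\,\mathcal{T}^{\mathrm{FK}}$ and sends the all-singletons partition to $|v_0\rangle$. Hence the Krylov space sits inside a $C_m$-dimensional invariant subspace.

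What this buys is a precise meaning for $\mathcal{V}_{\text{eff}}$. Your planarity and realizability steps never use $q$, so the connectivity space is $\text{NCP}(m)$ for every $q$. Its faithful image in the colour space, however, requires $q\ge m$. Injectivity follows by expanding $\Phi(\pi)$ in the orbit sums: the coefficient matrix is the zeta matrix of the refinement order, hence unitriangular. Your caveat is therefore substantive rather than cosmetic. For $q=2$, $m=3$ the $S_2$-invariant colour space has dimension $4<C_3=5$, so the lemma holds only in the connectivity reading. The paper's proof does not make this distinction.

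Three small repairs are needed:
\begin{enumerate}
\item Realizability needs only $v\neq 0$, and you should not lean on positivity. With the paper's Hamiltonian $H(\sigma)=J\sum\delta_{\sigma_u,\sigma_v}$ and $J>0$, one actually has $v=e^{-J/k_BT}-1\in(-1,0)$.
\item A horizontal layer leaves the partition unchanged only when all horizontal bonds are occupied. An absent bond detaches the new vertex as a singleton; this also preserves non-crossingness, so your conclusion stands.
\item The Krylov space is contained in, not equal to, $\Phi(\operatorname{span}\text{NCP}(m))$. The reflection $R$ already cuts it down for $m\ge 3$. Containment is all the lemma requires.
\end{enumerate}
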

\begin{proof}
A partition of the set $\{1, 2, \dots, m\}$ is non-crossing if and only if there do not exist four elements $a < b < c < d$ such that $a$ and $c$ belong to one block, while $b$ and $d$ belong to another distinct block. In the context of the planar TL algebra~\cite{martin1991a}, intersecting blocks would require topological lines to cross, which violates the two-dimensional planarity of the standard module. The enumeration of all such planar, non-crossing configurations of $m$ elements is a foundational combinatorial identity equal to the $m$-th Catalan number~\cite{flajolet2009}:
\begin{equation}
  \dim(\mathcal{V}_{\text{eff}}) = |\text{NCP}(m)| = C_m = \frac{1}{m+1}\binom{2m}{m}
\end{equation}
\end{proof}

\subsection{The Spatial Reversal Symmetry}
While the global color symmetry reduces the space to $C_m$, the grid graph inherently possesses a spatial top-to-bottom geometric symmetry that must be quotiented out to find the exact minimal polynomial degree.

\begin{definition}[Spatial Reversal Operator]
We define the reflection operator $R$ acting on the column configurations as:
\begin{equation}
  R |c_1, c_2, \dots, c_m\rangle = |c_m, c_{m-1}, \dots, c_1\rangle
\end{equation}
Since $R^2 = I$, $R$ is an involution.
\end{definition}

Because the horizontal and vertical transition weights are spatially homogeneous, the TM commutes with the reversal operator: $[\mathcal{T}_m(v), R] = 0$. Since the initial superposition boundary state $|v_0\rangle$ is symmetric, the Krylov subspace is locked in the $+1$ eigenspace of $R$. We define the symmetric projection operator:
\begin{equation}
  \Pi_{\text{sym}} = \frac{1}{2}(I + R)
\end{equation}
The exact theoretical bound for the recurrence order is the dimension of this symmetric invariant subspace, computed via the trace:
\begin{equation} \label{eq:trace_sym}
  \dim(\mathcal{V}_{\text{sym}}) = \operatorname{Tr}(\Pi_{\text{sym}}) = \frac{1}{2}\left( \operatorname{Tr}(I) + \operatorname{Tr}(R) \right)
\end{equation}

\subsection{Isomorphism with OEIS A007123}
We evaluate the trace components of equation (\ref{eq:trace_sym}) using Burnside's Lemma to account for the spatial orbits~\cite{dummit2004}.

The trace of the identity operator $I$ spans the entire unconstrained NCP space:
\begin{equation}
  \operatorname{Tr}(I) = |\text{NCP}(m)| = C_m = \frac{1}{m+1}\binom{2m}{m}
\end{equation}

The trace of the reversal operator $R$ counts the number of NCP that are invariant under spatial reflection. Geometrically, a symmetric NCP must either possess a fixed central block (for odd $m$) or be mirrored across the spatial centerline (for even $m$). The enumeration of these self-dual planar topologies yields the central binomial coefficients:
\begin{equation}
  \operatorname{Tr}(R) = \binom{m}{\lfloor m/2 \rfloor}
\end{equation}

Substituting these identities into the projection formula gives the final dimension of the reduced subspace:
\begin{equation}
  \dim(\mathcal{V}_{\text{sym}}) = \frac{1}{2} \left[ \frac{1}{m+1}\binom{2m}{m} + \binom{m}{\lfloor m/2 \rfloor} \right]
\end{equation}

\begin{theorem}[Catalan Saturation and OEIS A007123]



For the Potts model at any finite temperature ($T>0$), the maximal theoretical recurrence order of the partition function sequence maps to the integer sequence OEIS A007123 evaluated at index $m$~\cite{oeisA007123}.

\end{theorem}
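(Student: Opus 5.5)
The plan is to combine the three reductions already established and then identify the resulting closed form with the defining formula of OEIS A007123.

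\emph{Step 1: reduce to the symmetric sector.} By the Theorem on Recurrence Order and Krylov Dimension, the minimal recurrence order equals $\dim\mathcal{K}(v_0,\mathcal{T}_m(v))$. The Krylov subspace is contained in every $\mathcal{T}_m(v)$-invariant subspace that contains $|v_0\rangle$. By the Commutativity Lemma, $[\mathcal{T}_m(v),U_\sigma]=0$, and $U_\sigma|v_0\rangle=|v_0\rangle$. Hence every Krylov vector $\mathcal{T}_m(v)^k|v_0\rangle$ is $S_q$-invariant. The Catalan State Space Lemma then places the Krylov subspace inside the effective space $\mathcal{V}_{\text{eff}}$ of dimension $C_m$. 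Similarly, $[\mathcal{T}_m(v),R]=0$ and $R|v_0\rangle=|v_0\rangle$ give $R\,\mathcal{T}_m(v)^k|v_0\rangle=\mathcal{T}_m(v)^k|v_0\rangle$. So $\mathcal{K}\subseteq \Pi_{\text{sym}}\mathcal{V}_{\text{eff}}=\mathcal{V}_{\text{sym}}$, and therefore $\dim\mathcal{K}\le \dim\mathcal{V}_{\text{sym}}$.

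\emph{Step 2: compute $\dim\mathcal{V}_{\text{sym}}$.} Using the trace formula (\ref{eq:trace_sym}), $\operatorname{Tr}(I)=C_m$. For $\operatorname{Tr}(R)$, I would count reflection-invariant non-crossing partitions by the standard bijection between $\text{NCP}(m)$ and Dyck paths of semilength $m$. Under this bijection, reversal of $\{1,\dots,m\}$ corresponds to reversal of the path, so the fixed points are the palindromic Dyck paths. A palindromic Dyck path is determined by its first half, which is an arbitrary nonnegative prefix of length $m$. There are $\binom{m}{\lfloor m/2\rfloor}$ such prefixes, which gives $\operatorname{Tr}(R)=\binom{m}{\lfloor m/2\rfloor}$.

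\emph{Step 3: match with A007123.} The formula $\tfrac12\bigl[C_m+\binom{m}{\lfloor m/2\rfloor}\bigr]$ is, by Burnside's Lemma for the group $\{I,R\}$, exactly the number of Dyck paths of semilength $m$ counted up to reversal. This is the definition of OEIS A007123. A check of small cases ($m=1,2,3,4$ giving $1,2,4,10$) would confirm the indexing.

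\emph{Main obstacle.} The delicate point is the word ``maximal'': the argument above proves only the upper bound $\dim\mathcal{K}\le\dim\mathcal{V}_{\text{sym}}$, and I will state the theorem as this bound. Showing that the bound is attained for $q\ge4$ would need more than what has been established so far. Diagonalizability of $\mathcal{T}_{sym}$ makes $\dim\mathcal{K}$ equal to the number of distinct eigenvalues whose eigenvectors have nonzero overlap with $|v_0\rangle$. Attaining the bound would then require genericity: simple spectrum on $\mathcal{V}_{\text{sym}}$ and nonzero overlaps. To establish this, I would first verify it at a single generic value of $v$ and then argue that the Krylov (Gram) determinant is a nonzero polynomial in $v$. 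That argument gives saturation only for all but finitely many $v$, not for every $T>0$.
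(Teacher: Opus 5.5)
Your argument is correct for what it claims. On the counting it is the paper's argument: Burnside's Lemma for $\{I,R\}$ acting on $\text{NCP}(m)$, with $\operatorname{Tr}(I)=C_m$ and $\operatorname{Tr}(R)=\binom{m}{\lfloor m/2\rfloor}$. Your reduction of $\operatorname{Tr}(R)$ to nonnegative half-paths actually supplies the justification that the paper only gestures at with its ``central block / mirrored'' remark.

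The genuine divergence is the final step. The paper's proof concludes that the order \emph{equals} A007123$(m)$ because diagonalizability rules out Jordan blocks. You stop at $\dim\mathcal{K}\le\dim\mathcal{V}_{\text{sym}}$. Your objection to the paper's step is correct. For a diagonalizable operator, $\dim\mathcal{K}$ is the number of distinct eigenvalues whose eigenspaces $|v_0\rangle$ has a nonzero component in. So eigenvalue degeneracies inside $\mathcal{V}_{\text{sym}}$, or vanishing amplitudes, lower it without any Jordan block. The paper's own $T>0$ results, obtained while $\mathcal{T}_{sym}$ is still real symmetric, contradict equality for general $q$: A001998 for $q=3$ gives $25<26$ at $m=5$, and A001405 for $q=2$ gives $3<4$ at $m=3$. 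Your reading (the capacity bound is A007123$(m)$) is therefore the one under which the theorem is true. Any equality statement needs a $q$-dependent genericity argument like your Hankel/Gram-determinant sketch, and for fixed $q$ and $m$ that yields saturation for all but finitely many $v$ at best.

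There are two places to tighten.

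\emph{Step~1.} $S_q$-invariance by itself does not put $\mathcal{K}$ into a $C_m$-dimensional space. The $S_q$-invariant sector of the spin space has dimension equal to the number of set partitions of $\{1,\dots,m\}$ into at most $q$ blocks, which is the Bell number once $q\ge m$. What makes the Catalan lemma true is the following. Let $f_\pi$ be the indicator of colourings constant on the blocks of $\pi$, and let $J_i$ be the all-ones matrix on site $i$. The span of the $f_\pi$ over non-crossing $\pi$ has four properties:
\begin{itemize}
\item it contains $|v_0\rangle=f_{\{1\}\cdots\{m\}}$;
\item $V$ maps it into itself, since the factor $\delta_{c_i,c_{i+1}}$ merges the blocks of $i$ and $i+1$;
\item each site factor $vI+J_i$ of $H$ maps it into itself, since $J_i$ detaches $i$ into a singleton (or multiplies by $q$ if it already is one);
\item merging adjacent blocks and detaching a point both preserve non-crossingness.
\end{itemize}
Since $Rf_\pi=f_{R\pi}$, the $R$-fixed part of this span is spanned by orbit sums. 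This gives $\dim\mathcal{K}\le\mathrm{A007123}(m)$ for every $q$.

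\emph{Step~2.} ``The standard bijection'' must be the reflection-equivariant one. Double each $i$ into $2i-1,2i$ to obtain a non-crossing perfect matching of $\{1,\dots,2m\}$, then read openers and closers as up and down steps. A different common encoding writes $U^{|B|}D$ at the minimum of each block $B$ and $D$ elsewhere. It sends the reflection-fixed partition $\{1,4\}\{2,3\}$ to $\mathrm{UUDUUDDD}$, which is not invariant under reversal.
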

\begin{proof}
OEIS A007123 enumerates the number of NCP (equivalent to Dyck paths of length $2m$) up to spatial reversal. By applying Burnside's Lemma on the unconstrained NCP space under the group $\{I, R\}$, the orbit count is derived as $\frac{1}{2}(C_m + \binom{m}{\lfloor m/2 \rfloor})$. Because thermal fluctuations guarantee absolute diagonalizability (eradicating nilpotent Jordan blocks), the minimal polynomial degree saturates this geometric capacity without any dimensional loss. Thus, the recurrence order equals A007123($m$).
\end{proof}


\section{The TL Algebra and Semisimplicity}

By calculating the determinant of the Gram matrix within the framework of the TL algebra, we demonstrate that for any integer $q \ge 5$ colors, the system is semisimple at finite temperatures and matches OEIS A007123~\cite{oeisA007123}.

\subsection{The Loop Gas Expansion and TL algebra}
To translate the FK representation~\cite{wu1982} into an algebraic structure, we map the configurations to the 1D loop gas model~\cite{jacobsen2023}.

At finite temperatures, the coloring constraints relax into non-crossing curves connecting adjacent points along the lattice. Through standard graph mapping~\cite{baxter1976}, each closed loop generated in the spatial trajectory is assigned a topological fugacity (weight) of $\beta$, defined as:
\begin{equation}
    \beta = \sqrt{q}.
\end{equation}

Unclosed curves carry no weight. The system globally acquires a scalar factor of $\beta$ when curves topologically contract to form a complete closed loop. This mechanism translates the global partition function of graph coloring into local and sequential algebraic operations.

Furthermore, treating the state parameter $q$ as a continuous variable within this diagrammatic loop expansion has recently been shown to expose complex invariant theories and critical phase transitions~\cite{Jacobsen2024}, reinforcing the physical relevance of tracking topological components.

\begin{definition}[The TL Algebra]
Let $\beta \in \mathbb{C}$ and $m \ge 2$. The TL algebra $TL_m(\beta)$ is generated by the identity $I$ and $m-1$ generators $\{e_1, e_2, \dots, e_{m-1}\}$. They satisfy the foundational Jones relations~\cite{jones1983, martin1991a}:
\begin{align}
  e_i^2 &= \beta e_i \quad &\text{for } 1 \le i \le m-1 \label{eq:TL1} \\
  e_i e_{i\pm1} e_i &= e_i \quad &\text{for } 1 \le i, i\pm1 \le m-1 \label{eq:TL2} \\
  e_i e_j &= e_j e_i \quad &\text{for } |i-j| \ge 2 \label{eq:TL3}
\end{align}
\end{definition}

These three relations govern the physical connectivity of the thermal loop gas:
\begin{enumerate}
  \item \textbf{Loop Creation:} Applying the local projector $e_i$ twice forces a path to close upon itself, generating a closed loop. The loop is algebraically annihilated and replaced by the multiplicative scalar $\beta$~\cite{martin1991a}.
  \item \textbf{Curve Stretching:} Pulling a geometric curve over an adjacent spatial coordinate and reflecting it back does not alter the underlying topological connectivity (isotopy equivalence).
  \item \textbf{Locality:} Operations separated by a spatial distance of $|i-j| \ge 2$ act on disjoint tensor components and therefore commute.
\end{enumerate}

These operators are visually formalized using Kauffman's diagrams~\cite{difrancesco1997} (see figure~\ref{fig:kauffman_e2}), where an element in $TL_m(\beta)$ is represented as a bounding box with $m$ boundary nodes on top and $m$ on the bottom, internally connected by non-crossing continuous curves.

\begin{figure}[H]
    \centering
    \begin{tikzpicture}[scale=1.2]
        \foreach \x in {1,2,3,4,5} {
            \filldraw[black] (\x, 2) circle (2pt) node[above] {\x};
            \filldraw[black] (\x, 0) circle (2pt) node[below] {\x};
        }
        
        \draw[thick] (1,2) -- (1,0);
        \draw[thick] (4,2) -- (4,0);
        \draw[thick] (5,2) -- (5,0);
        
        \draw[thick] (3,2) arc (0:-180:0.5);
        \draw[thick] (3,0) arc (0:180:0.5);
        
        \draw[dashed, gray] (0.5, -0.5) rectangle (5.5, 2.5);
        \node at (0.2, 1) {$e_2$};
    \end{tikzpicture}
    \caption{Kauffman diagram representation of the Temperley-Lieb generator $e_2$ acting on a standard module with $m=5$ states. Adjacent boundary nodes $i=2$ and $i+1=3$ are connected by planar arcs, while all other sites propagate vertically as straight through-lines.}
    \label{fig:kauffman_e2}
\end{figure}

\subsection{The Standard Modules and Gram Matrix}
The physical state space transitions are evaluated by applying these TL generators to specialized vector spaces known as standard modules~\cite{martin1991a, ridout2014}.

\begin{definition}[Link States and Defects]
An $(m, d)$ link state consists of $m$ boundary nodes on a top line. $d$ nodes propagate straight down as through-lines ("defects"). The remaining $m-d$ nodes pairwise connect via non-crossing planar arcs. The parameters $m$ and $d$ must share the same parity, with $0 \le d \le m$~\cite{martin1991a}.
\end{definition}

The standard module $\mathcal{V}_{m, d}$ is spanned by all topologically valid $(m, d)$ link states. Its maximal capacity is given by the dimensional formula~\cite{martin1991a}:
\begin{equation}
  \dim(\mathcal{V}_{m, d}) = \binom{m}{\frac{m-d}{2}} - \binom{m}{\frac{m-d}{2} - 1}
\end{equation}

Within the thermal loop gas on a grid graph, topological trajectories cannot wrap around periodic boundaries. Consequently, the defect number is minimized: $d=0$ for even $m$, and $d=1$ for odd $m$. Because thermal fluctuations dissolve the strict proper adjacency constraints, the operative space $\mathcal{V}_{m, 0}$ is bounded by the Catalan numbers $C_m$.

To determine the stability of this space, we introduce a topological inner product~\cite{martin1991a}.

\begin{definition}[Invariant Bilinear Form]
For two basis states $|u\rangle$ and $|v\rangle$, the inner product $\langle u | v \rangle$ is evaluated by reflecting $\langle u|$ horizontally and fusing its $m$ boundary nodes directly to the $m$ nodes of $|v\rangle$.
\begin{enumerate}
  \item If the $d$ macroscopic defect lines from $\langle u|$ do not topologically align with the $d$ lines from $|v\rangle$, the inner product evaluates to $0$.
  \item If they perfectly align, the fused pairwise arcs resolve into a set of closed planar loops. If $c$ distinct loops are formed, the inner product evaluates to $\beta^c$.
\end{enumerate}
\end{definition}

Because the TL generators are self-adjoint under this reflection, $\langle u | e_i v \rangle = \langle e_i u | v \rangle$~\cite{martin1991a}. This invariant bilinear form constructs the Gram Matrix $G_{m, d}(\beta)$:
\begin{equation}
  [G_{m, d}(\beta)]_{u, v} = \langle u | v \rangle
\end{equation}

If $\det(G_{m, d}(\beta)) \neq 0$, the representation is semisimple, and the spatial dimension is preserved. If the determinant vanishes, nilpotent zero-norm states emerge within the radical subspace, triggering a dimensional collapse that suppresses the recurrence order~\cite{ridout2014}.

\subsection{The Cline-Kac Determinant and Beraha Numbers}
The locus of dimensional collapse is governed by the zeros of the Gram matrix determinant. Di Francesco~\cite{difrancesco1997} demonstrated that this determinant factorizes into a product of orthogonal polynomials.

\begin{theorem}[Determinant of the Gram Matrix]
The determinant for the standard module $\mathcal{V}_{m, d}$ evaluates to~\cite{difrancesco1997, martin1991a}:
\begin{equation}
  \det(G_{m, d}(\beta)) = \prod_{j=1}^{\frac{m-d}{2}} \left( \frac{U_{j+d}(\beta/2)}{U_{j-1}(\beta/2)} \right)^{P(m, d, j)}
\end{equation}
where $U_k(x)$ is the Chebyshev polynomial of the second kind, and the exponent $P(m, d, j)$ is a positive integer defined as:
\begin{equation}
  P(m, d, j) = \binom{m}{\frac{m-d}{2} - j} - \binom{m}{\frac{m-d}{2} - j - 1}
\end{equation}
\end{theorem}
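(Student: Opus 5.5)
The plan is the standard Gram--Schmidt argument over the tower of standard modules, in the spirit of Di Francesco's orthogonal-polynomial factorization. Throughout, set $\beta = 2\cos\theta$, so that $U_k(\beta/2) = \sin((k+1)\theta)/\sin\theta$. I would build an orthogonal basis of $\mathcal{V}_{m,d}$ with respect to the invariant bilinear form. The determinant of $G_{m,d}(\beta)$ is then the product of the squared norms of the basis vectors, because Gram--Schmidt is unitriangular in the link-state basis.

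\textbf{Step 1: triangular structure.} Organize the link states of $\mathcal{V}_{m,d}$ by the position and nesting of their arcs, building the link state recursively from $\mathcal{V}_{m-1,d\pm1}$ by adding the $m$-th node. That node is either a new defect or closes an arc onto the rightmost defect of a state in $\mathcal{V}_{m-1,d+1}$. This gives the branching rule
\[
\dim\mathcal{V}_{m,d} = \dim\mathcal{V}_{m-1,d-1} + \dim\mathcal{V}_{m-1,d+1},
\]
and the Gram matrix is block upper-triangular with respect to this filtration after a unitriangular change of basis. The change of basis uses the Jones--Wenzl projectors $P_d$ on the defect lines. These exist whenever $U_1,\dots,U_{d-1}$ at $\beta/2$ are nonzero, and they satisfy $\operatorname{Tr}(P_d) = U_d(\beta/2)$. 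Projecting the defects onto $P_d$ makes every cross term between the two summands vanish.

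\textbf{Step 2: the norm recursion.} Capping a defect of a vector in the $\mathcal{V}_{m-1,d+1}$ part multiplies its norm by the Wenzl ratio $U_{d+1}(\beta/2)/U_d(\beta/2)$. This is the standard identity that a partial trace of $P_{d+1}$ equals $\frac{U_{d+1}}{U_d}P_d$, derived from the Wenzl recursion $P_{d+1} = P_d\otimes 1 - \frac{U_{d-1}}{U_d}P_d e_d P_d$. Writing $D_{m,d} := \det G_{m,d}(\beta)$, this yields
\[
D_{m,d} = D_{m-1,d-1}\, D_{m-1,d+1}\left(\frac{U_{d+1}(\beta/2)}{U_d(\beta/2)}\right)^{\dim\mathcal{V}_{m-1,d+1}},
\]
with $D_{d,d}=1$.

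\textbf{Step 3: solve the recursion by induction on $m$.} Substitute the claimed product formula for $D_{m-1,d\pm1}$ and compare exponents of each ratio $U_{k}/U_{k-1}$. The telescoping factors $U_{j+d}/U_{j-1}$ in the statement are exactly products of consecutive ratios $U_{k}/U_{k-1}$ for $k=j,\dots,j+d$. The exponent bookkeeping therefore reduces to Pascal's rule applied to the ballot numbers
\[
P(m,d,j) = \dim\mathcal{V}_{m,d+2j}.
\]
Checking this identity, in the form $P(m,d,j) = P(m-1,d-1,j) + P(m-1,d+1,j-1)$ plus boundary terms, is routine.

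The main obstacle is Step 2 at the degenerate points $U_k(\beta/2)=0$, where the Jones--Wenzl projectors used in the orthogonalization do not exist. I would handle this by proving the identity for generic $\beta$ first (say $\beta>2$ or transcendental). Both sides are polynomials in $\beta$: the right side is a polynomial because each $U_{j-1}$ in a denominator cancels against the telescoping structure. The formula then extends to all $\beta\in\mathbb{C}$ by continuity.
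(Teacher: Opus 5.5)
The paper gives no proof of this theorem; it quotes the formula from Di Francesco and Martin. Your route is the standard self-contained proof:

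\begin{enumerate}
\item the restriction filtration $0\to\mathcal{V}_{m-1,d-1}\to\mathcal{V}_{m,d}\to\mathcal{V}_{m-1,d+1}\to 0$;
\item a Jones--Wenzl splitting and the partial-trace identity;
\item solving the resulting recursion;
\item extending from generic $\beta$.
\end{enumerate}

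The recursion you reach, $D_{m,d}=D_{m-1,d-1}D_{m-1,d+1}(U_{d+1}/U_d)^{\dim\mathcal{V}_{m-1,d+1}}$, is correct, with $D_{m-1,-1}:=1$ when $d=0$. For example, it gives $\beta^2-1$ for $\mathcal{V}_{3,1}$ and $U_3(\beta/2)$ for $\mathcal{V}_{4,2}$. Compared with the citation, your argument is uniform in $d$ and shows why degeneration is tied to zeros of the $U_k(\beta/2)$, which is what the paper relies on afterwards. However, two statements in it are wrong as written and need correcting.

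\textbf{First, the projector in Step 1 must be $P_{d+1}$, not $P_d$.} Inserting $P_d$ under the $d$ defects of a state of $\mathcal{V}_{m,d}$ acts as the identity in the standard module, since every non-identity term caps two defects and dies. So it cannot kill cross terms: in $\mathcal{V}_{3,1}$ we have $P_1=1$, and the pairing of $(12)|3$ with $1|(23)$ stays $1$.

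The correct lift of $z\in\mathcal{V}_{m-1,d+1}$ places $P_{d+1}$ under its $d+1$ defects and only then joins the last defect to node $m$. Orthogonality to the submodule then holds: pairing with a state whose node $m$ is a defect funnels the $d+1$ strands into $d-1$ defects, which forces a cap onto $P_{d+1}$. The norm comes from $P_{d+1}(w^\dagger z)P_{d+1}=\langle w,z\rangle P_{d+1}$, followed by your partial trace $\frac{U_{d+1}}{U_d}P_d$. This needs $U_1,\dots,U_d\neq 0$, not only up to $U_{d-1}$; that is harmless at generic $\beta$.

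\textbf{Second, the exponent identity in Step 3 is misindexed.} We have $P(m-1,d+1,j-1)=P(m-1,d-1,j)$ identically. So your version asserts $P(m,d,j)=2P(m-1,d-1,j)$ up to boundary terms, which already fails for $m=5$, $d=1$, $j=1$ ($4$ versus $6$). The correct branching rule is $P(m,d,j)=P(m-1,d-1,j)+P(m-1,d+1,j)$.

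Even with that fixed, the factors $U_{j+d}/U_{j-1}$ telescope over windows of different lengths for $d-1$, $d$, $d+1$. So you must compare exponents of each $r_k=U_k/U_{k-1}$ separately. In $D_{m,d}$ this exponent is $E_{m,d}(k)=\sum_{\max(1,k-d)\le j\le \min(k,(m-d)/2)}\dim\mathcal{V}_{m,d+2j}$.

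Apply the branching rule termwise. Then $E_{m,d}(k)-E_{m-1,d-1}(k)-E_{m-1,d+1}(k)$ reduces to two window-edge terms $\dim\mathcal{V}_{m-1,2k-d-1}$ with opposite signs. One is present for $k\ge d+1$ and the other for $k\ge d+2$. Only $k=d+1$ survives, and you recover exactly $r_{d+1}^{\dim\mathcal{V}_{m-1,d+1}}$.

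Finally, you do not need the telescoping remark to get polynomiality. The right-hand side agrees with the polynomial $\det G_{m,d}(\beta)$ for infinitely many $\beta$, so as a rational function it equals that polynomial. At degenerate $\beta$ the product must be read in this reduced form, since the literal product can be $0\cdot\infty$ (for instance at $\beta=0$ for $\mathcal{V}_{4,0}$).
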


The determinant vanishes if and only if the Chebyshev polynomials yield zero. The roots of $U_k(\cos \theta)$ are located at $\theta = \frac{j\pi}{k+1}$. By substituting $\beta/2 = \cos \theta$ and $\beta = \sqrt{q}$, we identify the critical color parameters $q$ that induce algebraic collapse. These specific values are defined as the Beraha Numbers~\cite{beraha1980}.

\begin{definition}[Beraha Numbers]
The Beraha numbers $B_k$ specify the continuous parameter values $q$ where the TL algebra loses semisimplicity~\cite{beraha1980}:
\begin{equation}
  B_k = 4 \cos^2\left(\frac{\pi}{k}\right), \quad \text{for integers } k \ge 2
\end{equation}
\end{definition}

The effective dimension of the Krylov subspace is truncated if and only if the available colors $q$ coincide with a Beraha number. 

\subsection{The Semisimple Limit}
We now prove that the TM is semisimple for any model with at least five states ($q \ge 5$).

\begin{theorem}[Topological Rigidity for $q \ge 5$]
For any integer $q \ge 5$, the algebra $TL_m(\sqrt{q})$ is semisimple and does not collapse. Consequently, the linear recurrence order experiences zero dimensional loss and saturates the theoretical Catalan maximum, matching OEIS A007123~\cite{oeisA007123}.
\end{theorem}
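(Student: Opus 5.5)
The plan is to reduce the statement to the non-vanishing of the Gram determinant in Theorem (Determinant of the Gram Matrix), and then to invoke the saturation argument of Theorem (Catalan Saturation and OEIS A007123).

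\textbf{Step 1: locate the Beraha numbers.} With $\beta=\sqrt{q}$ and $\beta/2=\cos\theta$, the determinant can vanish only at $q=B_k=4\cos^2(\pi j/k)$. Every such value satisfies $B_k\le 4$.

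For $q\ge 5$ we have $\beta/2=\sqrt{q}/2>1$. Write $\beta/2=\cosh\phi$ with $\phi>0$. Then
\[
U_k(\beta/2)=\frac{\sinh((k+1)\phi)}{\sinh\phi}>0 \quad\text{for all } k\ge 0 .
\]
So neither the numerators $U_{j+d}(\beta/2)$ nor the denominators $U_{j-1}(\beta/2)$ can vanish. The product formula is therefore a finite product of nonzero real numbers raised to the integer powers $P(m,d,j)$, so it is nonzero.

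It follows that $\det G_{m,d}(\sqrt{q})\neq 0$ for every $m$ and for $d\in\{0,1\}$, the defect numbers relevant to the grid.

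\textbf{Step 2: from non-degenerate Gram matrices to semisimplicity.} I would use the standard cellular-algebra criterion cited from \cite{martin1991a, ridout2014}. If every standard module $\mathcal{V}_{m,d}$ has a non-degenerate invariant form, then each is irreducible and the radical of $TL_m(\beta)$ vanishes.

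To be safe I would apply Step 1 to all $d\equiv m \pmod 2$, not only the minimal ones. The positivity argument works uniformly in $d$, so this costs nothing, and it gives semisimplicity of the whole algebra $TL_m(\sqrt{q})$.

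\textbf{Step 3: transfer to the recurrence order.} Semisimplicity rules out zero-norm radical vectors in the operative module, so no dimensional collapse occurs. Combined with the real-symmetric diagonalizability of $\mathcal{T}_{sym}$ established for $T>0$, the Krylov subspace generated from $|v_0\rangle$ fills the reflection-symmetric part of the non-crossing-partition space.

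By Theorem (Recurrence Order and Krylov Dimension) and the Burnside computation, the order equals
\[
\tfrac12\Bigl(C_m+\tbinom{m}{\lfloor m/2\rfloor}\Bigr)=\mathrm{A007123}(m).
\]

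\textbf{Main obstacle.} The delicate step is the last one. Semisimplicity and diagonalizability ensure that no dimension is lost for algebraic reasons. They do not by themselves guarantee that $|v_0\rangle$ is cyclic for $\mathcal{T}_m$ on the symmetric sector. That would fail, for instance, through accidental eigenvalue degeneracies, or through $|v_0\rangle$ being orthogonal to some eigenvector.

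I would first try a genericity argument. The entries of $\mathcal{T}_m(v)$ are polynomial in $v$, so the Krylov dimension equals its generic value except on a finite set of $v$. I would compute this generic value by a specialization, for example small $v>0$ or a perturbative expansion in $v$, and show there that the eigenvalues in the symmetric sector are distinct and each has nonzero overlap with $|v_0\rangle$. As a fallback, I would rely on the theorem's framing of the result as the maximal theoretical order, which is exactly what Steps 1–2 secure.
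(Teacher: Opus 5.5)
Your Steps 1--2 follow the paper's route: every Beraha value lies in $[0,4]$, so no Gram determinant vanishes for $q\ge 5$, hence semisimplicity. Writing $\sqrt q/2=\cosh\phi$, so that $U_k(\sqrt q/2)=\sinh((k+1)\phi)/\sinh\phi>0$, is a cleaner proof of this half than the paper's list of $B_k$. The cellular-algebra criterion is also the right justification for passing from non-degenerate Gram forms to semisimplicity, a step the paper's appeal to Wedderburn--Artin leaves implicit.

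One correction. The $C_m$-dimensional NCP space on which the grid transfer matrix acts is the zero-defect standard module of $TL_{2m}(\sqrt q)$, not $\mathcal V_{m,0}$ or $\mathcal V_{m,1}$ of $TL_m$; for example, $\dim\mathcal V_{4,0}=2$ while $C_4=14$. Your positivity argument is uniform in the rank, so it covers $TL_{2m}$ as well. Applied there, what it buys is the following. Let $\chi_\pi(c)=1$ if the coloring $c$ is constant on the blocks of $\pi$, and $0$ otherwise. The vectors $\chi_\pi$, $\pi\in\mathrm{NCP}(m)$, stay linearly independent in the spin space. So the reflection-invariant part of their span, which by planarity contains the Krylov space for every $q$, has dimension exactly $\mathrm{A007123}(m)$. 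That is a sharp \emph{upper} bound and nothing more.

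The genuine gap is Step 3, exactly where you locate it. The paper's proof makes the same leap without justification: it simply asserts that semisimplicity ``ensures the Krylov subspace spans the complete symmetric NCP volume.'' Semisimplicity is a property of the algebra, whereas saturation is a spectral property of one element and one vector. On the symmetric sector, $\mathcal T_{sym}$ must have $\mathrm{A007123}(m)$ distinct eigenvalues, and the boundary vector must have a nonzero component along each eigenvector. Neither follows from non-degenerate Gram forms or from diagonalizability.

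Your genericity repair does not reach this as sketched, for three reasons.
\begin{enumerate}
  \item Small $v$ is the worst possible specialization. $H(0)$ is the all-ones matrix, so $\mathcal T_m(0)$ has rank one and every nontrivial eigenvalue collapses to $0$. Worse, the first-order term in $v$ annihilates the $S_q$-invariant part of $|v_0\rangle^{\perp}$. Distinctness would therefore have to be extracted from high-order degenerate perturbation theory, uniformly in $m$.
  \item A full-rank point for a given $m$ only gives saturation off the finite zero set of the Krylov minors, and nothing prevents that set from meeting $(0,\infty)$. The rank drops wherever the overlap of the boundary vector with a real-analytic eigenvector vanishes (generically by changing sign), or wherever two eigenvalues in the sector meet.
  \item The fallback of reading the theorem as a statement about the maximal order proves only the upper bound. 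That is strictly weaker than the claimed ``zero dimensional loss.''
\end{enumerate}
At best, this route gives saturation for all but finitely many temperatures, and only for those $m$ where a full-rank point has actually been exhibited. It does not give saturation at every $T>0$ and every $m$, as the statement claims.
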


\begin{proof}
We evaluate the supremum of the Beraha sequence $B_k$. Because $\cos^2(x) \le 1$ for all real $x$, the monotonically increasing sequence $B_k = 4 \cos^2(\frac{\pi}{k})$ is bounded by 4:
\begin{itemize}
  \item $k=2 \implies B_2 = 0$
  \item $k=3 \implies B_3 = 1$
  \item $k=4 \implies B_4 = 2$
  \item $k=6 \implies B_6 = 3$
  \item $k \rightarrow \infty \implies \lim_{k \to \infty} 4 \cos^2\left(\frac{\pi}{k}\right) = 4$
\end{itemize}
Thus, all finite-index Beraha singularities are constrained to the interval $[0, 4)$~\cite{beraha1980}.

For any integer $q \ge 5$, the loop fugacity exceeds the accumulation limit ($q > B_\infty = 4$). Therefore, the singularity condition $\det(G_{m, d}(\sqrt{q})) = 0$ has no real solutions for $q \ge 5$.

Because the Gram determinant is non-zero, the representation space is protected from nilpotent defects. By the Wedderburn-Artin theorem~\cite{dummit2004}, the algebra decomposes into a direct sum of simple matrix rings. This semisimplicity ensures the Krylov subspace spans the complete symmetric NCP volume, locking the maximal recurrence order to the OEIS A007123 baseline~\cite{oeisA007123}.
\end{proof}

\section{Case Studies for $q \le 4$}


\subsection{Case Study I: $q = 2$, Projection to OEIS A001405}

For the $q=2$ Potts model, the loop fugacity evaluates to $\beta = \sqrt{2}$, mapping to the 4th Beraha singularity $B_4$~\cite{beraha1980}. At this specific parameter, the 3rd-order Chebyshev polynomial of the second kind vanishes.
\begin{equation}
  U_3\left(\frac{\sqrt{2}}{2}\right) = 8\left(\frac{\sqrt{2}}{2}\right)^3 - 4\left(\frac{\sqrt{2}}{2}\right) = 0
\end{equation}
This zero root enforces an algebraic truncation where the 3rd Jones-Wenzl projector has a null norm, establishing the radical ideal $P_3 \equiv 0$~\cite{ridout2014}. Geometrically, this ideal restricts the unconstrained NCP to a maximum topological height of $h_{max} \le 2$ corresponding to the $A_3$ Dynkin diagram~\cite{pasquier1987}. To calculate the theoretical dimensional upper bound of this space, $\mathcal{V}^{(h \le 2)}$, we evaluate the spatial reversal projection operator $\Pi = \frac{1}{2}(I + R)$ using Burnside's Lemma.
\begin{equation}
  y_m = \operatorname{Tr}_{\mathcal{V}^{(h \le 2)}}(\Pi) = \frac{1}{2} \Big( \operatorname{Tr}_{\mathcal{V}^{(h \le 2)}}(I) + \operatorname{Tr}_{\mathcal{V}^{(h \le 2)}}(R) \Big)
\end{equation}

The trace of the identity enumerates all valid unconstrained height-2 Dyck paths, scaling as $2^{m-1}$ (OEIS A011782)~\cite{oeisA011782}. The trace of the reversal operator enumerates the spatially symmetric subset of these paths, scaling as $2^{\lfloor m/2 \rfloor}$ (OEIS A016116)~\cite{oeisA016116}.
\begin{equation}
  y_m = \frac{1}{2} \left( 2^{m-1} + 2^{\lfloor m/2 \rfloor} \right)
\end{equation}
Combining these traces produces the algebraic capacity of the topological space, which generates the integer sequence OEIS A005418~\cite{oeisA005418}. 


However, we must evaluate whether the actual physical state space can expand enough to saturate this topological limit. A direct mapping of the $q=2$ Potts model to the local spin-$1/2$ transverse-field Ising model (TFIM) via the Suzuki-Trotter decomposition yields the standard Hamiltonian:
\begin{equation}
  H_{\text{TFIM}} = -J \sum_{i=1}^{m-1} \sigma_i^z \sigma_{i+1}^z - h \sum_{i=1}^m \sigma_i^x
\end{equation}

Upon applying the standard Jordan-Wigner transformation, the transverse field maps to the local fermion density $\sigma_i^x = 1 - 2c_i^\dagger c_i$, while the Ising coupling $\sigma_i^z \sigma_{i+1}^z$ expands into:
\begin{equation}
  \sigma_i^z \sigma_{i+1}^z = (c_i^\dagger - c_i)(c_{i+1}^\dagger + c_{i+1}) = c_i^\dagger c_{i+1} + c_{i+1}^\dagger c_i + c_i^\dagger c_{i+1}^\dagger + c_{i+1} c_i
\end{equation}

In this local spin basis, the emergence of the superconducting pairing terms ($c_i^\dagger c_{i+1}^\dagger + c_{i+1} c_i$) implies that particles can be created or annihilated in pairs. These non-conserving pairing terms break any continuous $U(1)$ additive symmetry down to a discrete $\mathbb{Z}_2$ fermion parity.

To resolve this symmetry paradox and expose the true internal conservation mechanism, we must account for the representation jump induced by the Fortuin-Kasteleyn (FK) transformation. When the partition function is mapped to the random cluster loop gas, the layer-to-layer transfer matrix acts on the Temperley-Lieb algebra $TL_m(\sqrt{2})$. Under standard representation-theoretic isomorphism, $TL_m(\sqrt{2})$ is equivalent not to the transverse-field Ising chain, but to the isotropic XX quantum spin chain. 

In this specific FK loop gas representation, the generator of the Temperley-Lieb algebra $TL_m(\sqrt{2})$ maps to the isotropic XX quantum spin chain via the standard spin-$1/2$ representation:
\begin{equation}
  e_i = \frac{1}{\sqrt{2}} \left( \sigma_i^+ \sigma_{i+1}^- + \sigma_i^- \sigma_{i+1}^+ + \frac{1}{2}(1 + \sigma_i^z \sigma_{i+1}^z) + \frac{i}{2}(\sigma_{i+1}^z - \sigma_i^z) \right)
\end{equation}

where $\sigma_i^\pm = \frac{1}{2}(\sigma_i^x \pm i \sigma_i^y)$ are the spin raising and lowering operators. Applying the non-local Jordan-Wigner transformation,
\begin{equation}
  c_j^\dagger = \sigma_j^+ \prod_{k=1}^{j-1} (-\sigma_k^z), \quad c_j = \sigma_j^- \prod_{k=1}^{j-1} (-\sigma_k^z)
\end{equation}

with the spin-$z$ relation $\sigma_j^z = 2c_j^\dagger c_j - 1$, yields the bilinear fermionic form of the generators:
\begin{equation}
  e_i \mapsto \frac{1}{\sqrt{2}} \left( c_i^\dagger c_{i+1} + c_{i+1}^\dagger c_i + \frac{1}{2}(1 - 2c_i^\dagger c_i)(1 - 2c_{i+1}^\dagger c_{i+1}) + \frac{i}{2}(c_{i+1}^\dagger c_{i+1} - c_i^\dagger c_i) \right)
\end{equation}

Crucially, because $e_i$ contains only the charge-conserving hopping terms $\sigma_i^+ \sigma_{i+1}^-$ and $\sigma_i^- \sigma_{i+1}^+$, rather than double-flip terms like $\sigma_i^+ \sigma_{i+1}^+$, the pairing terms ($c_i^\dagger c_{i+1}^\dagger$ and $c_i c_{i+1}$) are zero in this FK bond basis. The absence of these pairing terms restores the continuous $U(1)$ additive conservation law, defined as the total magnetization or total particle number:
\begin{equation}
  S_z^{FK} = \sum_{j=1}^{m}\left(c_{j}^{\dagger}c_{j} - \frac{1}{2}\right)
\end{equation}

Because the commutator $[\mathcal{T}_{\text{sym}}, S_z^{FK}] = 0$, the TM is incapable of mixing different magnetization sectors. The boundary state locks the Krylov subspace inside the zero-magnetization sector ($S_z^{FK} \approx 0$), limiting the physical dimensionality to the central binomial coefficient~\cite{richard2010}.
\begin{equation}
  x_m = \dim\left(\mathcal{V}_{m}^{(S_z^{FK} \approx 0)}\right) = \binom{m}{\lfloor m/2 \rfloor} = \frac{m!}{\lfloor m/2 \rfloor ! \lceil m/2 \rceil !}
\end{equation}


This physical constraint generates OEIS A001405~\cite{oeisA001405}.


We can now compare the theoretical bound set by the algebraic wall $y_m$ (A005418)~\cite{oeisA005418} against the true physical wall $x_m$ (A001405)~\cite{oeisA001405}. Crucially, because our TM framework respects the internal system symmetries, the empirical Krylov dimension tracks the physical limit $x_m$. For the initial lattice lengths $1 \le m \le 6$, the evaluations of both integer sequences are identical.
\begin{equation}
  x_m = y_m \in \{1, 2, 3, 6, 10, 20\} \quad \text{for } m \in \{1, 2, 3, 4, 5, 6\}
\end{equation}
However, as summarized in table~\ref{tab:q2_complete_bifurcation}, a permanent dimensional bifurcation occurs at $m=7$.

\begin{table}[H]
\centering
\caption{Evaluation of the $q=2$ Potts model. The difference column ($y_m - x_m$) quantifies the number of null states truncated by physical symmetry.}
\label{tab:q2_complete_bifurcation}
\vspace{8pt}
\hspace*{-1cm} 
\begin{tabular}{@{}ccccc@{}}
\toprule
\textbf{Height} & \textbf{Empirical Order} & \textbf{Physical Limit} $x_m$ & \textbf{Algebraic Limit} $y_m$ & \textbf{Null States} \\ 
($m$) & ($\dim(\mathcal{K}).$) & (OEIS A001405) & (OEIS A005418) & ($y_m - x_m$) \\ \midrule
1  & 1   & 1   & 1   & 0  \\
2  & 2   & 2   & 2   & 0  \\
3  & 3   & 3   & 3   & 0  \\
4  & 6   & 6   & 6   & 0  \\
5  & 10  & 10  & 10  & 0  \\
6  & 20  & 20  & 20  & 0  \\
7  & \textbf{35}  & \textbf{35}  & \textbf{36}  & \textbf{1}  \\
8  & 70  & 70  & 72  & 2  \\
9  & 126 & 126 & 136 & 10 \\
10 & 252 & 252 & 272 & 20 \\ \bottomrule
\end{tabular}
\end{table}



To clarify the algebraic origin of the dimensional bifurcation at $m=7$, we analyze the structure of the orthogonal space separating the topological limit $y_7 = 36$ and the physical Krylov capacity $x_7 = 35$. Numerical nullspace decomposition, performed over a large finite prime field $\mathbb{Z}_p$ to eliminate floating-point artifacts, isolates the single missing dimension. Decoding this null vector reveals a sparse linear combination of four physical spin configurations:

\begin{equation}
|\psi_{\text{null}}\rangle = |0000101\rangle - |0010001\rangle - |0001101\rangle + |0011001\rangle
\end{equation}

This dimensional drop is caused by a quantum superposition of these four valid topologies. These states share a static background at indices 0, 1, 5, and 6, and factorize into an antisymmetric localized tensor product across the active sites:

\begin{equation}
|\Psi_{\text{varying}}\rangle = (|0\rangle_{s_2}|1\rangle_{s_4} - |1\rangle_{s_2}|0\rangle_{s_4}) \otimes (|0\rangle_{s_3} - |1\rangle_{s_3})
\end{equation}

This tensor structure reveals the physical mechanism driving the truncation. The term $(|0\rangle_{s_2}|1\rangle_{s_4} - |1\rangle_{s_2}|0\rangle_{s_4})$ acts as a singlet-like odd parity boundary. This antisymmetric excitation mixes different $M_z$ sub-lattice magnetization sectors, forcing the state into a non-symmetric parity sector. Because the transfer matrix commutes with the $U(1)$ total magnetization operator ($[\mathcal{T}_{\text{sym}}, S_z^{FK}] = 0$), the dynamics are confined.

Direct numerical iteration confirms that applying the transfer matrix to this specific 4-state superposition evaluates to the zero vector over the finite field, proving:

\begin{equation}
\mathcal{T}_{\text{sym}} |\psi_{\text{null}}\rangle = \mathbf{0} \pmod p
\end{equation}

Because $|\psi_{\text{null}}\rangle$ remains orthogonal to the generated Krylov subspace $\mathcal{K}(v_0, \mathcal{T}_{\text{sym}})$, all transition matrix elements $\langle\psi_{\text{null}}| (\mathcal{T}_{\text{sym}})^k | v_0\rangle$ vanish. It yields no contribution to the partition function sequence, collapsing the effective recurrence order from 36 to 35. The 36-dimensional numerical coefficients $c_k$, along with the complete Krylov matrix generation and nullspace validation scripts, are provided in the accompanying open-source repository.

This geometric tensor factorization is not an isolated anomaly at $m=7$, but represents a structural rule for all generated null vectors in higher-dimensional systems. Decoding of the null vectors for higher dimensions confirms that they all factorize into a static background and an antisymmetric local dipole excitation.

For instance, a specific null vector in the $m=8$ system consists of the superposition:
\begin{equation}
|\psi_{\text{null}}^{(m=8)}\rangle = |00001010\rangle - |00010010\rangle - |00101110\rangle + |00110110\rangle
\end{equation}

Isolating the varying sites ($s_2, s_3, s_4, s_5$), this state factorizes into an antisymmetric boundary entangled with an internal state:
\begin{equation}
|\Psi_{\text{varying}}^{(m=8)}\rangle = (|0\rangle_{s_2}|0\rangle_{s_5} - |1\rangle_{s_2}|1\rangle_{s_5}) \otimes (|0\rangle_{s_3}|1\rangle_{s_4} - |1\rangle_{s_3}|0\rangle_{s_4})
\end{equation}

Similarly, expanding the analysis to $m=9$, a specific null vector manifests as:
\begin{equation}
|\psi_{\text{null}}^{(m=9)}\rangle = |000010101\rangle - |000101101\rangle - |001010001\rangle + |001101001\rangle
\end{equation}

Extracting the five active sites ($s_2, s_3, s_4, s_5, s_6$), the antisymmetric tensor product structure emerges:
\begin{equation}
|\Psi_{\text{varying}}^{(m=9)}\rangle = (|0\rangle_{s_2}|1\rangle_{s_6} - |1\rangle_{s_2}|0\rangle_{s_6}) \otimes (|0\rangle_{s_3}|1\rangle_{s_4}|0\rangle_{s_5} - |1\rangle_{s_3}|0\rangle_{s_4}|1\rangle_{s_5})
\end{equation}

In every case, the varying segment is bounded by an odd-parity antisymmetric tensor structure, enforcing an orthogonality with the symmetric physical boundary conditions.



\subsection{The Breakdown of Free Fermion Logic for $q \ge 3$}

For $q \ge 3$, the global symmetry group of the standard Potts model expands to the non-Abelian symmetric group $S_q$, which under phase transitions or specific anisotropic limits reduces to the Abelian $\mathbb{Z}_q$ symmetry of the chiral clock model~\cite{wu1982}. To capture the discrete rotational and shift symmetries of this expanded $q$-state Hilbert space $\mathcal{V}_m \cong (\mathbb{C}^q)^{\otimes m}$, we must abandon the standard Pauli matrices and introduce generalized Sylvester's clock-and-shift matrices~\cite{fendley2012}.
\begin{equation}
  Z_j^q = I, \quad X_j^q = I, \quad X_j^\dagger = X_j^{q-1}, \quad Z_j^\dagger = Z_j^{q-1}
\end{equation}
These operators are defined locally on each lattice site $j$ and satisfy the generalized Weyl non-commutation relations. The non-commutativity is governed by a primitive $q$-th root of unity, which injects a fractional phase into the operator algebra representing the local interactions~\cite{fradkin1980}.
\begin{equation}
  Z_j X_j = \omega X_j Z_j \quad \text{where} \quad \omega = \exp\left(\frac{2\pi i}{q}\right)
\end{equation}
For distinct spatial lattice sites where $j \neq k$, the clock and shift operators commute, preserving the spatial locality of the classical Hamiltonian. To search for a free-fermion analogue capable of diagonalizing the thermal transfer matrix, we apply the Fradkin-Kadanoff transformation~\cite{fendley2012}.
\begin{equation}
  [X_j, Z_k] = 0 \quad \text{for} \quad j \neq k
\end{equation}
This specific mapping serves as a higher-order generalization of the Jordan-Wigner transformation, defining non-local string variables known as parafermion operators. The creation and annihilation operators for these fractional excitations are constructed by attaching a cumulative string of shift operators to a local clock operator~\cite{fradkin1980}.
\begin{equation}
  \psi_{2j-1} = \left(\prod_{k=1}^{j-1} X_k\right) Z_j, \quad \psi_{2j} = \omega^{(q-1)/2}\left(\prod_{k=1}^{j-1} X_k\right) Z_j X_j
\end{equation}
Evaluating the exchange statistics between spatially separated parafermions reveals an algebraic barrier that separates $q \ge 3$ systems from the $q=2$ Ising limit. For any topologically ordered sequence where $j < k$, the product of these string operators produces a fractional braiding phase rather than a simple minus sign~\cite{fendley2012}.
\begin{equation}
  \psi_j \psi_k = \omega \psi_k \psi_j
\end{equation}
When evaluating the state parameter at $q \ge 3$, the phase $\omega = e^{i 2\pi / 3}$ diverges from the real values $\pm 1$. This fractional complex phase shatters the anti-commutation relations required to satisfy the Pauli exclusion principle, destroying the concept of independent fermionic quasiparticles~\cite{baxter1982}.
\begin{equation}
  \omega = e^{i 2\pi / 3} \neq -1 \implies \{ \psi_j, \psi_k \} \neq 0
\end{equation}
Because the standard canonical anti-commutation fails, it becomes impossible to construct a bilinear, locally commuting continuous particle number operator defined by $n_j = \psi_j^\dagger \psi_j$. Without a valid local number operator that commutes across the lattice, the system cannot support an additive $U(1)$ continuous symmetry analogous to the $S_z$ magnetization operator~\cite{fendley2012}.

\begin{equation}
  \nexists \hat{N} = \sum_{j=1}^m n_j \quad \text{such that} \quad [\mathcal{T}_{sym}, \hat{N}] = 0
\end{equation}
Consequently, the fractional braiding phase $\omega = e^{i 2\pi / q}$ of the Fradkin-Kadanoff parafermionic algebra enforces a symmetry reduction $U(1) \rightarrow \mathbb{Z}_q$. The true physical invariant of the transfer matrix is therefore a discrete, multiplicative global parity operator. This conserved quantity is formed by the tensor product of all local shift matrices across the entire lattice width $m$~\cite{wu1982}.
\begin{equation}
  \mathcal{Q}_{total} = \prod_{j=1}^{m} X_j
\end{equation}


Because the individual local shift operators satisfy $X_j^q = I$, the $q$-th power of the total parity operator must invariably return the identity matrix. This periodic constraint restricts its eigenvalues to the $q$-th roots of unity, defining a basic modulo conservation law that lacks the fine-grained hierarchical block-diagonalization structure of an additive integer quantum number~\cite{fendley2012}.
\begin{equation}
  \mathcal{Q}_{total}^q = I \implies \lambda_{\mathcal{Q}} \in \{1, \omega, \omega^2, \dots, \omega^{q-1}\}, \quad \text{where } q \ge 3
\end{equation}

This $\mathbb{Z}_q$ modulo conservation partitions the total physical Hilbert space into $q$ orthogonal, isolated sectors of approximately equal combinatorial volume.

\subsection{Case Study II: $q=3$, Projection to OEIS A001998}

As established in Section 5.2, the $q \ge 3$ Potts models lack an additive continuous $U(1)$ physical conservation mechanism~\cite{wu1982}. At this specific state parameter, the loop fugacity evaluates to $\beta = \sqrt{3}$, corresponding to the 6th Beraha singularity $B_6 = 3$~\cite{beraha1980}. We evaluate the condition for semisimplicity by computing the 5th-order Chebyshev polynomial of the second kind at the argument $x = \beta/2 = \sqrt{3}/2$.
\begin{equation}
  U_5\left(\frac{\sqrt{3}}{2}\right) = \frac{\sin\left(6 \arccos\left(\frac{\sqrt{3}}{2}\right)\right)}{\sin\left(\arccos\left(\frac{\sqrt{3}}{2}\right)\right)} = \frac{\sin\left(6 \cdot \frac{\pi}{6}\right)}{\sin\left(\frac{\pi}{6}\right)} = 0
\end{equation}

The algebraic consequence of this vanishing polynomial is absolute: it dictates that the 5th Jones-Wenzl projector has a zero norm, thereby embedding a non-trivial nilpotent radical ideal within the algebra~\cite{ridout2014}.
\begin{equation} \label{eq:P5_zero}
  P_5 \equiv 0 \implies \mathcal{I}_5 = \langle P_5 \rangle
\end{equation}

In a geometric context, the vanishing ideal $P_5 \equiv 0$ forbids any topological configuration containing five parallel through-lines. Under Pasquier's Restricted Solid-on-Solid (RSOS) mapping theory, this topological prohibition forces the unconstrained state space onto the $A_5$ Dynkin diagram~\cite{pasquier1987}. Consequently, all admissible NCP are mapped to restricted Dyck paths bounded by a maximum topological height of $h_{max} \le 4$. We denote this bounded subspace as $\mathcal{V}^{(h \le 4)}$. 

Since real thermal fluctuations ($T>0$) guarantee the TM remains diagonal, the physical subspace saturates the mathematically restricted geometry without any dimensional loss.

To calculate the exact minimal recurrence order $x_m$, we apply Burnside's Lemma over the height-restricted quotient space $\mathcal{V}^{(h \le 4)}$, utilizing the spatial reversal projection operator $\Pi = \frac{1}{2}(I + R)$.
\begin{equation} \label{eq:burnside_q3}
  x_m = \operatorname{Tr}_{\mathcal{V}^{(h \le 4)}}(\Pi) = \frac{1}{2} \Big( \operatorname{Tr}_{\mathcal{V}^{(h \le 4)}}(I) + \operatorname{Tr}_{\mathcal{V}^{(h \le 4)}}(R) \Big)
\end{equation}

The evaluation of the identity trace enumerates all valid planar Dyck paths of length $2m$ that never exceed height 4. Algebraically, this is equivalent to computing the $(0,0)$ entry of the adjacency matrix raised to the $2m$-th power.
\begin{equation}
  \operatorname{Tr}_{\mathcal{V}^{(h \le 4)}}(I) = \left(M^{2m}\right)_{0,0}
\end{equation}

This specific matrix element operation enumerates the restricted paths, generating the integer sequence OEIS A124302 (evaluated at the index shift $m+1$)~\cite{oeisA124302}.

Subsequently, the trace of the reversal operator enumerates only the sub-population of these paths that are invariant under spatial reflection. A symmetric path of total length $2m$ is generated by a partial half-path of length $m$ that terminates at any valid internal node $k$ on the $A_5$ graph.
\begin{equation}
  \operatorname{Tr}_{\mathcal{V}^{(h \le 4)}}(R) = \sum_{k \equiv m \pmod 2} \left(M^m\right)_{0,k}
\end{equation}
The parity constraint ($k \equiv m \pmod 2$) ensures the path's terminal node interfaces with its mirrored counterpart. The algebraic summation of this boundary-conditioned adjacency tensor maps to OEIS A182522 (evaluated at index shift $m+1$)~\cite{oeisA182522}.

Combining these two derived traces produces the exact recurrence order formula for the $q=3$ system.
\begin{equation}
  x_m = \frac{1}{2} \Big( \text{A124302}(m+1) + \text{A182522}(m+1) \Big)
\end{equation}
This proves that the recurrence order generates OEIS A001998 ($1, 2, 4, 10, 25, 70 \dots$)~\cite{oeisA001998}. The physical state space, having no $U(1)$ conservation to limit its volume, is forced to obey the $P_5 \equiv 0$ algebraic ideal. The numerical data presented in table~\ref{tab:q3_dimensional_resonance} corroborates this theoretical saturation.

\begin{table}[H]
\centering
\caption{Comparison of the empirical recurrence order against theoretical limits for the $q=3$ Potts model.}
\label{tab:q3_dimensional_resonance}
\begin{tabular}{@{}cccc@{}}
\toprule
\textbf{Height}  & \textbf{Empirical Order} $x_m$ & \textbf{Theoretical Limit} $y_m$& \textbf{Difference} \\  ($m$) & (Krylov Dimension) & (OEIS A001998) & ($x_m -y_m$) \\ \midrule
1  & 1    & 1    & 0 \\
2  & 2    & 2    & 0 \\
3  & 4    & 4    & 0 \\
4  & 10   & 10   & 0 \\
5  & 25   & 25   & 0 \\
6  & 70   & 70   & 0 \\
7  & 196  & 196  & 0 \\
8  & 574  & 574  & 0 \\
9  & 1681 & 1681 & 0 \\
10 & 5002 & 5002 & 0 \\ \bottomrule
\end{tabular}
\end{table}

\subsection{Case Study III: $q \ge 4$, Projection to OEIS A007123}

The $q=4$ boundary ($\beta = 2$) represents a different class of algebraic singularity, corresponding to the macroscopic accumulation limit of the entire infinite sequence of Beraha numbers~\cite{beraha1980}.
\begin{equation}
  B_\infty = \lim_{k \to \infty} 4\cos^2\left(\frac{\pi}{k}\right) = 4
\end{equation}

To evaluate the semisimplicity of the algebra at this limit, we compute the roots of the Chebyshev polynomials of the second kind at the argument $x = \beta/2 = 1$. Applying L'Hôpital's rule resolves the indeterminate form, demonstrating linear algebraic growth rather than oscillatory zero-crossings~\cite{difrancesco1997}.
\begin{equation}
  U_k(1) = \lim_{x \to 1} \frac{\sin((k+1)\arccos x)}{\sin(\arccos x)} = k + 1
\end{equation}

Because $k+1 \neq 0$ for any finite lattice index $k \in \mathbb{N}$, no finite-index Jones-Wenzl projector possesses a zero norm~\cite{ridout2014}.
\begin{equation} \label{eq:P_infty}
  P_k \neq 0 \quad (\forall k < \infty) \implies P_\infty \equiv 0
\end{equation}

The absence of a finite-index zero-norm projector strips the TL algebra of any local nilpotent ideal ($\mathcal{I}_k = \emptyset$). Consequently, the maximum topological height of the NCP remains unconstrained ($h_{max} \to \infty$)~\cite{pasquier1987}. With both the physical free-fermion conservation wall invalidated (as proven in Section 5.2) and the algebraic wall pushed to geometric infinity, the physical system explores the planar topological space without any dimensional truncation.

To determine the exact minimal recurrence order, we apply Burnside's Lemma over the complete, unconstrained Catalan space $\mathcal{V}_{NCP}$, utilizing the spatial reversal projection operator $\Pi = \frac{1}{2}(I + R)$~\cite{flajolet2009}.
\begin{align}
  \dim(\mathcal{V}_{sym}) &= \frac{1}{2} \Big( \operatorname{Tr}_{\mathcal{V}_{NCP}}(I) + \operatorname{Tr}_{\mathcal{V}_{NCP}}(R) \Big) \nonumber \\
  &= \frac{1}{2} \left[ \frac{1}{m+1}\binom{2m}{m} + \binom{m}{\lfloor m/2 \rfloor} \right]
\end{align}

This proves that at real finite temperatures ($T > 0$), the $q \ge 4$ Potts systems avoid dimensional collapse entirely. The Krylov subspace saturates the unrestricted planar state capacity, generating the integer sequence OEIS A007123 across all sequential lattice heights~\cite{oeisA007123}. As demonstrated in table~\ref{tab:q4_catalan_saturation}, the empirical recurrence order saturates this limit.

\begin{table}[H]
\centering
\caption{For $q \ge 4$ Potts models up to lattice height $m=10$.}
\label{tab:q4_catalan_saturation}
\begin{tabular}{@{}cccc@{}}
\toprule
\textbf{Height} ($m$) & \textbf{Empirical Order} $x_m$ & \textbf{Theoretical Limit} $y_m$ & \textbf{Difference} \\
& (Krylov Dimension) & (OEIS A007123) & ($x_m - y_m$) \\ \midrule
1  & 1     & 1     & 0 \\
2  & 2     & 2     & 0 \\
3  & 4     & 4     & 0 \\
4  & 10    & 10    & 0 \\
5  & 26    & 26    & 0 \\
6  & 76    & 76    & 0 \\
7  & 232   & 232   & 0 \\
8  & 750   & 750   & 0 \\
9  & 2494  & 2494  & 0 \\
10 & 8524  & 8524  & 0 \\ \bottomrule
\end{tabular}
\end{table}



\subsubsection{The Parameter Scaling Near \texorpdfstring{$q \to 4$}{q -> 4}}
To construct a complete full-parameter-space description of the recurrence order scaling, we examine the critical behavior as the state parameter $q$ approaches the accumulation limit $B_{\infty} = 4$.

When approaching from above ($q \to 4^+$), the absence of Beraha singularities in the regime $q > 4$ guarantees that the Temperley-Lieb algebra $TL_m(\sqrt{q})$ remains semisimple. Consequently, no finite-index zero-norm Jones-Wenzl projectors can emerge, locking the physical recurrence order to the OEIS A007123 baseline across the entire continuous interval $q \in (4, \infty)$.

Conversely, approaching from below ($q \to 4^-$) traverses an infinitely dense sequence of Beraha roots $B_k = 4 \cos^2(\pi/k)$. As $k \to \infty$, the critical state parameters (such as $B_{10} \approx 3.618$, $B_{20} \approx 3.902$, $B_{30} \approx 3.956$, and $B_{40} \approx 3.975$) asymptote to 4. Each $k$-th root embeds a corresponding zero-norm projector $P_{k-1} \equiv 0$, enforcing a topological height constraint $h_{\max} \le k - 2$ on the underlying planar Dyck paths.

As illustrated in Figure~\ref{fig:beraha_scaling}, evaluating the relative capacity loss $1 - d/d(B_{\infty})$ reveals that this critical limit does not produce an abrupt macroscopic discontinuity. For any given Beraha root $B_k$, the height-restricted symmetric space preserves the full unconstrained topological capacity without any dimensional loss ($1 - d/d_{\infty} = 0$) for all lattice heights up to $m = h_{\max}$. A sharp dimensional bifurcation occurs at the boundary $m = h_{\max} + 1$, where the height restriction truncates the state space for the first time. As $q \to 4^-$, the restriction boundary $h_{\max} \to \infty$, delaying the onset of dimensional collapse and gradually approaching the integer sequence A007123.

\vspace{0.5em}
\noindent\textbf{Remark 5.1 (Computational Limits in Finite-Size Scaling).}
\textit{Evaluating Krylov dimensions in $\mathbb{Z}_p$ requires exact roots for $U_{k-1}(\sqrt{q}/2) = 0$; numerical approximations of irrational Beraha numbers $B_k$ break this zero-norm condition and artificially restore full rank. Symbolic computation preserves exactness but becomes unfeasible for $m \ge 6$ due to expression explosion. Furthermore, testing high-order roots near $q \to 4^-$ requires $m = k - 1$, which vastly exceeds the memory bounds ($m \approx 10$) of transfer matrix algorithms. Due to these constraints, empirical verification is intractable, and the diagram below reflects theoretical derivations.}

\begin{figure}[H]
    \centering
    \hspace*{-1.7cm}
    \begin{tikzpicture}
    \begin{semilogyaxis}[
        width=1.15\textwidth,
        height=0.56\textwidth,
        xlabel={Lattice Height $m$},
        ylabel={Relative Capacity Loss $1 - d / d(B_{\infty})$},
        xmin=1, xmax=50,
        ymin=1e-21, ymax=1e1,
        grid=both,
        grid style={dashed, gray!30},
        legend pos=north west,
        legend style={font=\footnotesize, at = {(0.02, 0.51)}, anchor = north west,cells={anchor=west}, fill opacity=0.9, draw opacity=1, text opacity=1},
        title={Onset of Dimensional Collapse Near the $q \to 4$ Critical Limit}
    ]

    \addplot[black, solid, line width=1.5pt] coordinates {(1, 1) (100, 1)};
    \addlegendentry{$y = 1$}
   
    \addplot[magenta, densely dotted, mark=none, line width=1.2pt] coordinates {
        (39,2.9393e-21) (40,3.0510e-20) (41,3.1640e-19) (42,2.2661e-18) (43,1.2751e-17) (44,5.9358e-17) (45,2.3804e-16) (46,8.4478e-16) (47,2.7063e-15) (48,7.9439e-15) (49,2.1618e-14) (50,5.5054e-14)
    };
    \addlegendentry{$q \approx 3.975$ ($B_{40}$): $h_{\max} \le 38$, Onset $m = 39$}
    
    \addplot[purple, dashed, mark=none, line width=1.2pt] coordinates {
        (29,1.9955e-15) (30,1.5727e-14) (31,1.2376e-13) (32,6.7994e-13) (33,2.9760e-12) (34,1.0882e-11) (35,3.4603e-11) (36,9.8216e-11) (37,2.5370e-10) (38,6.0510e-10) (39,1.3477e-09) (40,2.8285e-09) (41,5.6341e-09) (42,1.0716e-08) (43,1.9559e-08) (44,3.4404e-08) (45,5.8532e-08) (46,9.6619e-08) (47,1.5516e-07) (48,2.4300e-07) (49,3.7187e-07) (50,5.5716e-07)
    };
    \addlegendentry{$q \approx 3.956$ ($B_{30}$): $h_{\max} \le 28$, Onset $m = 29$}
    
    \addplot[red, dashed, mark=none, line width=1.2pt] coordinates {
        (19,1.1316e-09) (20,6.0936e-09) (21,3.2698e-08) (22,1.2505e-07) (23,3.9100e-07) (24,1.0396e-06) (25,2.4447e-06) (26,5.2088e-06) (27,1.0241e-05) (28,1.8832e-05) (29,3.2725e-05) (30,5.4180e-05) (31,8.6020e-05) (32,1.3167e-04) (33,1.9516e-04) (34,2.8113e-04) (35,3.9482e-04) (36,5.4202e-04) (37,7.2900e-04) (38,9.6250e-04) (39,1.2496e-03) (40,1.5977e-03) (41,2.0144e-03) (42,2.5074e-03) (43,3.0844e-03) (44,3.7533e-03) (45,4.5216e-03) (46,5.3967e-03) (47,6.3859e-03) (48,7.4961e-03) (49,8.7339e-03) (50,1.0105e-02)
    };
    \addlegendentry{$q \approx 3.902$ ($B_{20}$): $h_{\max} \le 18$, Onset $m = 19$}
    
    \addplot[orange, dashdotted, mark=none, line width=1.2pt] coordinates {
        (9,4.0096e-04) (10,1.1732e-03) (11,3.3756e-03) (12,7.2750e-03) (13,1.3653e-02) (14,2.2700e-02) (15,3.4710e-02) (16,4.9645e-02) (17,6.7440e-02) (18,8.7874e-02) (19,1.1069e-01) (20,1.3557e-01) (21,1.6220e-01) (22,1.9024e-01) (23,2.1937e-01) (24,2.4928e-01) (25,2.7970e-01) (26,3.1037e-01) (27,3.4106e-01) (28,3.7158e-01) (29,4.0174e-01) (30,4.3140e-01) (31,4.6044e-01) (32,4.8874e-01) (33,5.1624e-01) (34,5.4285e-01) (35,5.6853e-01) (36,5.9325e-01) (37,6.1697e-01) (38,6.3969e-01) (39,6.6140e-01) (40,6.8211e-01) (41,7.0181e-01) (42,7.2054e-01) (43,7.3831e-01) (44,7.5513e-01) (45,7.7105e-01) (46,7.8609e-01) (47,8.0027e-01) (48,8.1364e-01) (49,8.2622e-01) (50,8.3805e-01)
    };
    \addlegendentry{$q \approx 3.618$ ($B_{10}$): $h_{\max} \le 8$, Onset $m = 9$}
    
    \addplot[blue!70!black, solid, mark=none, line width=1.2pt] coordinates {
        (5,3.8462e-02) (6,7.8947e-02) (7,1.5517e-01) (8,2.3467e-01) (9,3.2598e-01) (10,4.1319e-01) (11,4.9757e-01) (12,5.7375e-01) (13,6.4216e-01) (14,7.0177e-01) (15,7.5326e-01) (16,7.9704e-01) (17,8.3396e-01) (18,8.6480e-01) (19,8.9038e-01) (20,9.1147e-01) (21,9.2874e-01) (22,9.4283e-01) (23,9.5426e-01) (24,9.6351e-01) (25,9.7096e-01) (26,9.7693e-01) (27,9.8172e-01) (28,9.8554e-01) (29,9.8859e-01) (30,9.9101e-01) (31,9.9292e-01) (32,9.9444e-01) (33,9.9564e-01) (34,9.9658e-01) (35,9.9732e-01) (36,9.9791e-01) (37,9.9837e-01) (38,9.9873e-01) (39,9.9901e-01) (40,9.9923e-01) (41,9.9940e-01) (42,9.9953e-01) (43,9.9964e-01) (44,9.9972e-01) (45,9.9978e-01) (46,9.9983e-01) (47,9.9987e-01) (48,9.9990e-01) (49,9.9992e-01) (50,9.9994e-01)
    };
    \addlegendentry{$q = 3$ ($B_{6}$): $h_{\max} \le 4$, Onset $m = 5$}

    \end{semilogyaxis}
    \end{tikzpicture}
    \caption{The onset of dimensional collapse as $q \to 4^-$, measured by the relative capacity loss $1 - d/d(B_{\infty})$. For each Beraha root $B_k$, the capacity loss remains zero ($1 - d/d_{\infty} = 0$) for $m \le h_{\max}$, proving that the finite-size system is indistinguishable from the $q = 4$ unconstrained baseline. The onset at $m = h_{\max} + 1$ demonstrates how higher Beraha singularities delay the dimensional collapse to larger spatial scales.}
    \label{fig:beraha_scaling}
\end{figure}



\section{Conclusion}

This study establishes a mathematical framework for calculating the minimal thermal recurrence order of the Potts model partition function in grid graphs. By leveraging the Fortuin-Kasteleyn loop gas expansion and the representation theory of the Temperley-Lieb algebra, we mapped the transfer matrix evolution directly to topological standard modules. We demonstrated that a real thermal Gibbs measure regularizes the system, ensuring diagonalizability~\cite{beraha1980, martin1991a}. Furthermore, through the application of Burnside's Lemma and symmetric projection operators, we isolated the spatially invariant Krylov subspaces, translating physical transition dynamics into enumerative combinatorial capacities.


Our findings reveal that the recurrence order is governed by Beraha singularities and internal physical conservation laws. For systems with $q \ge 4$, the Krylov subspace saturates the unconstrained planar state capacity governed by Catalan numbers (OEIS A007123)~\cite{oeisA007123}. Conversely, dimensional collapse occurs at specific critical parameters:
\begin{enumerate}
  \item For $q=3$, the vanishing 5th Jones-Wenzl projector embeds a nilpotent ideal, forcing the recurrence sequence to map to height-restricted RSOS paths (OEIS A001998)~\cite{oeisA001998}.
  \item For $q=2$, the emergence of a continuous $U(1)$ magnetization conservation law bypasses algebraic limits, trapping the dynamics within a discrete zero-magnetization sector and yielding the central binomial coefficients (OEIS A001405)~\cite{oeisA001405}.
\end{enumerate}

This framework defines the lower bounds of algorithmic complexity for TM and tensor network evaluations on finite grids. Furthermore, the mapping of macroscopic dimensional collapse to local physical symmetries demonstrates that empirical recurrence orders can serve as a diagnostic tool for identifying hidden topological restrictions in generic loop gas models. 

Future research must address these limitations by extending the methodology to higher-dimensional or non-planar topologies, where standard TL planarity transitions into the broader Partition Algebra~\cite{martin2000}.


\section*{Declarations}

\begin{description}
    \item[Conflict of Interest:] The author declares no conflicts of interest.
    \item[Data Access Statement:] The data that support the findings of this study (C++ source code) are openly available in Anonymous.4open at \url{https://anonymous.4open.science/r/q_color-in-finite-T-7359/README.md} 
    (for calculating the Krylov dimensions) and \url{https://anonymous.4open.science/r/null_space-5456/README.md} (for null vector extraction).
    \item[Ethics Statement:] This research did not involve human or animal subjects.
    \item[Funding Statement:] This research received no specific grant from any funding agency in the public, commercial, or not-for-profit sectors.
\end{description}


\bibliographystyle{elsarticle-num}
\bibliography{references}

\end{document}